\mathchardef\mhyphen="2D
\newcommand{\eps}{\ensuremath{\varepsilon}}
\renewcommand{\algref}[1]{Algorithm~\ref{alg:#1}}
\newcommand{\proporef}[1]{Proposition~\ref{propo:#1}}
\newcommand{\thmref}[1]{Theorem~\ref{thm:#1}}
\newcommand{\lemref}[1]{Lemma~\ref{lem:#1}}
\newcommand{\secref}[1]{Section~\ref{sec:#1}}
\newtheorem{thm}{Theorem}[section]
\newtheorem{lem}[thm]{Lemma}
\newtheorem{proposition}[thm]{Proposition}
\newtheorem{claim}[thm]{Claim}
\global\long\def\Oh{{\cal O}}
\global\long\def\tOh{\tilde{\Oh}}
\global\long\def\tOmega{\tilde{\Omega}}
\newcommand{\SSB}[2]{\mathcal{S}(#1;#2)}
\newcommand{\SSunb}[2]{\mathcal{S}^{unb}(#1;#2)}
\newcommand{\SUM}{\Sigma}
\newcommand{\pl}[1][]{\oplus_{#1}}
\newcommand{\plt}{\pl[t]}
\newcommand{\N}{\mathbb{N}}
\newcommand{\Z}{\mathbb{Z}}
\newcommand{\Ex}{\mathbb{E}}
\newcommand{\poly}{\textup{poly}}
\newcommand{\polylog}{\textup{polylog}}
\newcommand{\F}{\mathcal{F}}
\newcommand{\C}{\mathcal{C}}
\newcommand{\vc}[1]{\mathbf{#1}}
\newcommand{\va}{\vc{a}}
\newcommand{\vb}{\vc{b}}
\newcommand{\ve}{\vc{e}}
\newcommand{\vf}{\vc{f}}
\newcommand{\outC}{\textup{out}(C)}
\newcommand{\out}[1]{\textup{out}(#1)}
\newcommand{\len}{\textup{len}}
\newcommand{\summ}{\textup{sum}}
\newcommand{\bigboxplus}{
  \mathop{
    \vphantom{\bigoplus} 
    \mathchoice
      {\vcenter{\hbox{\resizebox{\widthof{$\displaystyle\bigoplus$}}{!}{$\boxplus$}}}}
      {\vcenter{\hbox{\resizebox{\widthof{$\bigoplus$}}{!}{$\boxplus$}}}}
      {\vcenter{\hbox{\resizebox{\widthof{$\scriptstyle\oplus$}}{!}{$\boxplus$}}}}
      {\vcenter{\hbox{\resizebox{\widthof{$\scriptscriptstyle\oplus$}}{!}{$\boxplus$}}}}
  }\displaylimits 
}
\newcommand{\kClique}{$k$\nobreakdash-\textsc{Clique}\xspace}
\newcommand{\SetCover}{\textsc{SetCover}\xspace}
\newcommand{\SubsetSum}{\textsc{SubsetSum}\xspace}
\newcommand{\UnbSubsetSum}{\textsc{UnboundedSubsetSum}\xspace}
\newcommand{\ColorCoding}{\mathtt{ColorCoding}}
\newcommand{\ColorCodingLayer}{\mathtt{ColorCodingLayer}}
\newcommand{\FasterSubsetSum}{\mathtt{FasterSubsetSum}}
\newcommand{\IfSODA}[1]{}
\newcommand{\IfNotSODA}[1]{#1}
\begin{document}

\title{A Near-Linear Pseudopolynomial Time Algorithm for Subset Sum}
\author{Karl Bringmann\thanks{Max Planck Institute for Informatics, Saarland Informatics Campus, Germany, \texttt{kbringma@mpi-inf.mpg.de}. 
} 
}
\maketitle

\medskip

\begin{abstract}
Given a set $Z$ of $n$ positive integers and a target value~$t$, the \SubsetSum problem asks whether any subset of~$Z$ sums to $t$. A textbook pseudopolynomial time algorithm by Bellman from 1957 solves \SubsetSum in time $\Oh(n \, t)$. This has been improved to $\Oh(n \max Z)$ by Pisinger [J.~Algorithms'99] and recently to $\tOh(\sqrt{n} \, t)$ by Koiliaris and Xu [SODA'17].

Here we present a simple randomized algorithm running in time $\tOh(n+t)$. This improves upon a classic algorithm and is likely to be near-optimal, since it matches conditional lower bounds from \SetCover and \kClique. 

We then use our new algorithm and additional tricks to improve the best known polynomial space solution from time $\tOh(n^3 t)$ and space $\tOh(n^2)$ to time $\tOh(n \, t)$ and space $\tOh(n \log t)$, assuming the Extended Riemann Hypothesis. Unconditionally, we obtain time $\tOh(n \, t^{1+\eps})$ and space $\tOh(n \, t^\eps)$ for any constant $\eps > 0$.

\end{abstract}

\IfNotSODA{\newpage}


\section{Introduction}



Given a (multi-)set $Z$ of $n$ positive integers and a target~$t$, the \SubsetSum problem asks whether there is a subset $Y$ of $Z$ summing to exactly~$t$.  
This classic NP-hard problem draws its significance from the fact that it lies at the core of many other NP-hard (optimization) problems, e.g. \SubsetSum can easily be reduced to the \textsc{Knapsack} problem, \textsc{Constrained Shortest Path} (also known as \textsc{Restricted Shortest Path}), and many other problems. 

Bellman showed in 1957 that \SubsetSum can be solved in pseudopolynomial time $\Oh(n t)$ by dynamic programming~\cite{bellman1957dynamic}. This algorithm is being taught for decades in undergraduate algorithms courses and thus had a great influence on computer science.
Since this pseudopolynomial time algorithm is a fundamental part of our curriculum, and \SubsetSum is one of the core NP-hard problems, it is an important question whether the running time of $\Oh(nt)$ can be improved.

Pisinger~\cite{pisinger2003dynamic} used basic RAM parallelism (allowing to operate on $\Theta(\log t)$ bits in constant time) to obtain the first improvement: an $\Oh(nt/\log t)$ algorithm. Considering $s := \max Z$, Pisinger~\cite{pisinger1999linear} presented an $\Oh(n s)$ algorithm, which is faster than $\Oh(nt)$ in some situations.
The first algorithm breaking through the $\Oh(nt)$ barrier by a polynomial factor in the worst case was the recent $\tOh(\sqrt{n} \, t)$ algorithm\footnote{For a running time $T$ depending on the input size $n$ and possibly more parameters, such as $t$, we write $\tOh(T)$ as shorthand for $\Oh(T \polylog(T))$. Similarly, $\tOmega(T)$ denotes $\Omega(T / \polylog(T))$.} by Koiliaris and Xu~\cite{KoiliarisX15}. Roughly speaking, they use hashing to solve \SubsetSum quickly if $Z$ is contained in a small interval. Then they reduce the general case to the small interval case by appropriately splitting $Z$ into smaller sets. They also presented an $\tOh(t^{4/3})$ algorithm.

There is reason to believe that \SubsetSum has no $t^{1-\eps} n^{\Oh(1)}$ algorithm for any $\eps > 0$, since this would yield an $2^{(1-\eps')n} (nm)^{\Oh(1)}$ algorithm for \textsc{SetCover} (on $m$ sets over a universe of size $n$) via the reductions in~\cite{cygan2016problems}, and thus break the \textsc{SetCover} hypothesis. A second reason is that any combinatorial $t^{1-\eps} n^{\Oh(1)}$ algorithm for \SubsetSum would yield a combinatorial $\Oh(n^{(1-\eps')k})$ algorithm for \textsc{$k$-Clique}, for some large constant~$k$, via the reduction in~\cite{abboud2014losing}\footnote{Abboud et al.~\cite{abboud2014losing} presented a reduction from \textsc{$k$\nobreakdash-Clique} to \textsc{$\Oh(k^2)$-SUM} on $\Oh(f(k) n^2)$ integers in the range $\{1,\ldots,\Oh(f(k) (n^{1+o(1)})^k)\}$ for some (explicit) function $f$. Note that \textsc{$k'$-SUM} can be reduced to \SubsetSum by introducing $\Oh(\log k')$ leading bits that ensure choosing exactly $k'$ integers. This yields a reduction from \textsc{$k$-Clique} to \SubsetSum on $\Oh(f'(k) n^2)$ integers in $\{1,\ldots,\Oh(f'(k) (n^{1+o(1)})^k)\}$ for some function $f'$. Let $\eps > 0$, $c \in \mathbb{R}$ be constants and let $k$ be sufficiently large ($k \ge 8c/\eps$). Then for all sufficiently large $n \ge n_0$, such that the $n^{o(1)}$-factor is less than $n^{\eps/4}$, a combinatorial $\Oh(t^{1-\eps} n^c)$ algorithm for \SubsetSum would yield a combinatorial $\Oh(n^{k-\eps k/2})$ algorithm for $k$-Clique, refuting the combinatorial $k$-Clique conjecture. 
}. 
This means that polynomial improvements over running time $t$, even in terms of $n$, are unlikely. However, these arguments give no evidence that the additional factor $n$ of the $\Oh(nt)$ dynamic programming algorithm is necessary (or the factor $\sqrt{n}$ of Koiliaris and Xu).
Specifically, they leave the open problem whether there is an $\tOh(n+t)$ algorithm.

Note that an $\tOh(n+t)$ algorithm would also up to polylogarithmic factors dominate the $\Oh(n s)$ algorithm by Pisinger~\cite{pisinger1999linear}, where $s = \max Z$, since any non-trivial instance satisfies $s \ge t/n$.

\paragraph{Near-linear Time}
We present an algorithm in time $\tOh(n+t)$. This improves the classic dynamic programming solution by a factor $\tilde \Omega(n)$ and the best known algorithm by a factor $\tilde \Omega(\sqrt{n})$. Moreover, we match the conditional lower bounds, so any further polynomial improvement would beat the $\Oh^*(2^n)$ \textsc{SetCover} algorithm and possibly the $\Oh(n^k)$ combinatorial \textsc{$k$-Clique} algorithm. 

\begin{thm}[\secref{simple}] \label{thm:main}
  \SubsetSum can be solved in time $\tOh(n+t)$ by a randomized, one-sided error algorithm with error probability $(n+t)^{-\Omega(1)}$. 
\end{thm}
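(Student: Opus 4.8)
The plan is to compute a set $S\subseteq\{0,1,\dots,t\}$ that contains only genuine subset sums of $Z$ and, with high probability, contains \emph{every} subset sum of $Z$ that is at most $t$; the algorithm then outputs ``yes'' iff $t\in S$. This is automatically a one-sided error: since every element of $S$ is a true subset sum, a ``yes'' answer is always correct, and a ``no'' answer can be wrong only when the internal randomness fails. The one primitive used throughout is the capped sumset $A\plt B:=\{a+b:a\in A,\ b\in B,\ a+b\le t\}$ of two subsets of $\{0,\dots,t\}$, which I can compute in time $\tOh(t)$ by Boolean convolution via FFT. Plain divide and conquer --- split $Z$ in half, recursively get all subset sums $\le t$ of each half, merge with one $\plt$ --- would cost $\tOh(nt)$, so the work must be arranged more cleverly; the two ingredients are \emph{color-coding} and \emph{partitioning $Z$ by magnitude}.

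First I would set up a subroutine $\ColorCoding(Z,t,k)$ that is cheap when the target subset is small: assign each element of $Z$ one of $k$ colors uniformly at random to get classes $Z^{(1)},\dots,Z^{(k)}$, form $S:=(\{0\}\cup Z^{(1)})\plt\cdots\plt(\{0\}\cup Z^{(k)})$ by a balanced binary tree of $\plt$-operations, repeat $\Theta(\log(1/\delta))$ times, and return the union. If some $Y\subseteq Z$ with $|Y|\le\sqrt k$ has sum at most $t$, then in one repetition the elements of $Y$ get pairwise distinct colors with probability $\ge\tfrac12$, and then --- each class contributing at most one element of $Y$ --- the sum of $Y$ is produced; the repetitions push the failure probability below $\delta$. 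Every value output is patently a subset sum of $Z$. A single repetition is dominated by its FFTs, which handle polynomials of degree $\le t$ for a total of $\tOh(k\,t)$, so I will only ever invoke this with $k$ polylogarithmic.

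Next, to deal with arbitrary (possibly large) subsets, I would partition $Z$ by magnitude: for $i=1,\dots,\lceil\log_2 t\rceil$ put $Z_i:=\{z\in Z:\,t/2^{i}<z\le t/2^{i-1}\}$ (discarding any element exceeding $t$), noting that every subset of $Z_i$ with sum $\le t$ uses at most $2^i$ elements. The procedure $\ColorCodingLayer(Z_i,t)$ sets $\ell_i:=\min(2^i,|Z_i|)$, partitions $Z_i$ into $\ell_i$ parts uniformly at random, and uses the fact that throwing at most $\ell_i$ balls into $\ell_i$ bins leaves every bin with only $O(\log(\ell_i/\delta))$ balls with probability $\ge1-\delta$: hence it suffices to run $\ColorCoding$ on each part with $\Theta(\log^2(\ell_i/\delta))=\polylog(n+t)$ colors, which whp captures every subset sum of that part using at most $O(\log(\ell_i/\delta))$ of its elements, and then to merge the $\ell_i$ part-outputs by a balanced binary tree of $\plt$-operations. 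Finally $\FasterSubsetSum$ computes $S_i:=\ColorCodingLayer(Z_i,t)$ for each layer, returns $S:=S_1\plt\cdots\plt S_{\lceil\log_2 t\rceil}$, and checks whether $t\in S$. For correctness I fix any witnessing subset $Y$ summing to $t$; the output is right whenever, for every layer, the random partition spreads $Y\cap Z_i$ evenly and every induced $\ColorCoding$ call that matters succeeds --- at most $O(n\log t)$ events, each failing with probability $\le\delta$, so choosing $\delta:=(n+t)^{-\Theta(1)}$ and a union bound gives overall error $(n+t)^{-\Omega(1)}$.

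The real content, and the step I expect to be the main obstacle, is the running-time analysis --- specifically, showing that partitioning by magnitude is doing essential work, namely forcing the sets handled \emph{low} in every merge tree to be small. Inside $\ColorCodingLayer(Z_i,t)$, a node at depth $d$ (root at depth $0$) is responsible for at most $2^{i-d}$ of the $\ell_i$ parts; as each part-output consists of subset sums of only polylogarithmically many elements of $Z_i$ (the colors used inside $\ColorCoding$), the set stored at that node consists of subset sums of at most $2^{i-d}\cdot\polylog(n+t)$ elements of $Z_i$, each of size $\le t/2^{i-1}$, and is anyway capped at $t$; so it has size $\tOh(\min(t,\,t/2^{d}))$ and the FFT there costs $\tOh(\min(t,\,t/2^{d}))$. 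Summing over the $2^{d}$ nodes at each depth $d\le i\le\log_2 t$ yields $\tOh(t)$ per layer; the $\ColorCoding$ calls on the parts likewise total $\tOh(t)$ per layer, since each uses polylogarithmically many colors on polynomials of degree $\tOh(t/2^i)$ and there are $\ell_i\le2^i$ of them; summing over the $O(\log t)$ layers and adding the final merge of the $S_i$ keeps everything at $\tOh(t)$, while reading the input and forming all the partitions costs $\tOh(n)$. Thus the algorithm runs in time $\tOh(n+t)$. The single place where the magnitude split is indispensable is the size bound $\tOh(\min(t,\,t/2^d))$ above: without it a node could already hold a $\Theta(t)$-size set at large depth, and the per-depth sum would collapse back to $\Theta(nt)$.
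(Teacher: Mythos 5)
Your proposal is correct and follows essentially the same approach as the paper: color-coding (random partition into $\Theta(k^2)$ classes, capped sumsets via FFT, boost by repetition) to capture small solutions, layer splitting by magnitude so that each layer has a bounded solution size, and a two-stage scheme inside each layer (coarse random partition so each part sees only $O(\log)$ witness elements, then polylog-color color-coding per part, then a binary-tree merge of capped sumsets whose FFT cost decays geometrically with depth). Your minor choices differ from the paper only cosmetically — you use $O(\log t)$ layers rather than $O(\log n)$ plus a catch-all bottom layer, and $\ell_i$ parts rather than $\ell/\log(\ell/\delta)$ parts in the second stage — and neither affects correctness or the $\tOh(n+t)$ bound.
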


More precisely, if $\SSB{Z}{t}$ is the set of all sums generated by subsets of $Z$ and bounded by $t$, then we can compute in time $\Oh(n+t \log t \log^3(n/\delta) \log n)$ a set $S \subseteq \SSB{Z}{t}$ containing any $s \in \SSB{Z}{t}$ with probability at least $1-\delta$. 


We briefly also consider the \UnbSubsetSum problem, where each input number may be used multiple times instead of just once, and present a simple deterministic $\tOh(n+t)$ algorithm. In \secref{unbounded} we prove this result and discuss why it is much simpler than our main result.

\begin{thm} \label{thm:unbsubsetsum}
  \UnbSubsetSum can be solved deterministically in time $\Oh(n+t \log t)$.
\end{thm}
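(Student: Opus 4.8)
The plan is to turn the problem into a ``coin problem'' modulo the smallest available coin, and then solve that coin problem with a handful of fast Fourier transforms.

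\textbf{Normalisation.} Using a Boolean array of length $t+1$ I would first replace $Z$, in $\Oh(n+t)$ time, by the set $Z'$ of its distinct elements that are at most $t$; clearly $\SSunb{Z}{t}=\SSunb{Z'}{t}$. If $Z'=\emptyset$ the answer is $\{0\}$ and if $1\in Z'$ it is $\{0,\dots,t\}$, so we may assume $b:=\min Z'\ge 2$ and $|Z'|\le t$.

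\textbf{Structure.} Since $b\in Z'$, the set $\SSunb{Z'}{t}$ is closed under adding $b$ as long as the sum stays $\le t$. Hence, letting $\mu_r$ be the smallest element of $\SSunb{Z'}{t}$ congruent to $r$ modulo $b$ (and $\mu_r=\infty$ if there is none), we get
\[
  \SSunb{Z'}{t}=\bigcup_{r=0}^{b-1}\bigl(\{\mu_r,\,\mu_r+b,\,\mu_r+2b,\dots\}\cap\{0,\dots,t\}\bigr),
\]
where the only nontrivial inclusion ``$\subseteq$'' holds because a reachable $s\equiv r\pmod b$ satisfies $s\ge\mu_r$ and $s-\mu_r\in b\,\mathbb{Z}_{\ge 0}$. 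So it suffices to compute all finite $\mu_r\le t$; from these the whole set is produced by a single $\Oh(t)$ left-to-right pass that propagates reachability in steps of $b$ (set $S[s]\leftarrow S[s]\lor S[s-b]$ for $s=b,\dots,t$).

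\textbf{Computing the minima.} Each finite $\mu_r\le t$ is a sum of at most $b-1$ elements of $Z'$, since it is the length of a shortest --- hence vertex-simple --- path from $0$ in the digraph on $\mathbb{Z}_b$ whose arcs are $v\to(v+z)\bmod b$ of length $z$, for $z\in Z'$. Therefore $\lceil\log_2 b\rceil$ rounds of the squaring step $S\leftarrow S\plt S$, started from $S=\{0\}\cup Z'$, already yield a set (restricted to $\{0,\dots,t\}$) that contains every such $\mu_r$; each round is one FFT-based multiplication of $0/1$ vectors of length $t+1$, costing $\Oh(t\log t)$. Equivalently one may run Dijkstra on that digraph, which (after keeping only the smallest element of each residue class, leaving at most $\min(b,|Z'|)$ arc types) is cheap when $b$ is small. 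Together with the two $\Oh(n+t)$ passes this already gives a deterministic $\tOh(n+t)$ algorithm.

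\textbf{Main obstacle.} Shaving the remaining factor $\log b$ to reach the stated $\Oh(n+t\log t)$ is the delicate part, and where I expect the real work to lie. I would do a case split on $b$: for small $b$ the direct shortest-path computation on the $b$-vertex Cayley digraph is already within budget; for large $b$ only $\lfloor t/b\rfloor$ copies of any element can ever be used, so very few squaring rounds suffice (one may stop after $\Oh(\log\min\{b,\,t/b\})$ rounds); and the intermediate range has to be squeezed between these two observations, replacing doubling by the explicit Dijkstra wherever even that round count is too large. Pinning down a threshold for which the bookkeeping closes at exactly $\Oh(n+t\log t)$ is the crux; the correctness of the whole scheme, by contrast, is immediate from the displayed identity.
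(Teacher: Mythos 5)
Your coin-problem reduction mod $b := \min Z'$ is a genuinely different route from the paper's, and the first two steps are sound: the characterization of $\SSunb{Z'}{t}$ via the residue-class minima $\mu_r$, the simple-walk argument that each finite $\mu_r$ uses at most $\min(b-1,\lfloor t/b\rfloor)$ summands, and the recovery of the full answer by an $\Oh(t)$ propagation pass are all correct. But the doubling scheme then costs $\Oh(t\log t)$ per round over $\lceil\log_2\min(b,t/b)\rceil$ rounds, i.e.\ $\Oh(n+t\log t\cdot\log\min(b,t/b))$, which for $b\approx\sqrt t$ is $\Oh(n+t\log^2 t)$: that is $\tOh(n+t)$, but not the exact $\Oh(n+t\log t)$ you need to prove. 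The proposed case split does not close this. Dijkstra on the Cayley digraph over $\Z_b$ runs in $\Theta(b\cdot d + b\log b)$ where $d$ is the number of distinct residues of $Z'$ mod $b$, and $d$ can be $\Theta(b)$ (e.g.\ $Z'\supseteq\{b,\dots,2b-1\}$), so Dijkstra already exceeds $t\log t$ once $b$ is a bit above $\sqrt{t\log t}$; on the other side, doubling still needs $\Theta(\log(t/b))=\Theta(\log t)$ full-length convolutions for any $b=t^{1-\Omega(1)}$. In the wide window $b\in\bigl(\sqrt{t}\,\polylog t,\;t^{1-\eps}\bigr)$ both of your methods cost $\omega(t\log t)$ and no third method is given, so this is a genuine gap rather than bookkeeping that remains to be pinned down.

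The paper sidesteps this entirely by doubling the \emph{target} rather than the summand count, which is what makes the convolution lengths, not just the round count, geometric. It first computes $S_0=\SSunb{Z}{t/n}$ by the trivial $\Oh(n\cdot t/n)=\Oh(t)$ dynamic program, then sets $S_i:=S_{i-1}\pl[t_i]S_{i-1}\pl[t_i]Z$ with $t_i:=2^it/n$ for $i=1,\dots,\lceil\log n\rceil$. Correctness holds because any sequence summing to at most $t_i$ can be cut, at the first position where the running prefix sum exceeds $t_{i-1}$, into two pieces of sum at most $t_{i-1}$ together with one element of $Z$. The $i$-th round costs $\Oh(t_i\log t_i)$, and since the $t_i$ grow geometrically from $t/n$ to $t$ the total is $\Oh(n+t\log t)$, independent of any residue structure of $Z$.
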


\paragraph{Polynomial Space}
A surprising, relatively recent result is that \SubsetSum can be solved in pseudopolynomial time and \emph{polynomial} space: Lokshtanov and Nederlof~\cite{lokshtanov2010saving} presented an algorithm in time $\tOh(n^3 t)$ and space $\tOh(n^2)$. In contrast, the $\Oh(nt)$ dynamic programming algorithm as well as all improvements need space $\tilde \Theta(t)$, which can be much larger than $\poly(n)$. 
We use our new \SubsetSum algorithm and some additional tricks to improve upon their bounds. This almost answers an open problem by J.~Nederlof~\cite{husfeldtDagstuhl} for a time $\tOh(nt)$ and space $\tOh(n)$ algorithm.

\begin{thm}[\secref{polynomial}] \label{thm:polynomial}
  \SubsetSum has a randomized, one-sided error algorithm with error probability $(n+t)^{-\Omega(1)}$ in 
  \begin{itemize}
    \item time $\tOh(n \, t)$ and space $\tOh(n \log t)$ assuming the Extended Riemann Hypothesis (ERH), and 
    \item time $\tOh(n \, t^{1+\eps})$ and space $\tOh(n \, t^\eps)$  for any constant $\eps > 0$ unconditionally.
  \end{itemize}
\end{thm}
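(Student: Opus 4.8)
The plan is to follow the algebraic small-space strategy of Lokshtanov and Nederlof~\cite{lokshtanov2010saving}, but to drive the relevant transform lengths down to $\tOh(t)$ by exploiting the structure of the algorithm behind \thmref{main}. As usual it suffices to decide whether $t\in\SSB{Z}{t}$; discarding every $z>t$ we may assume $\sum_{z\in Z}z\le nt$, and $t$ is a subset sum iff the coefficient of $x^t$ in a suitable generating polynomial $F$ is nonzero over $\Z$. The new ingredient, compared to a black-box use of~\cite{lokshtanov2010saving}, is that the color-coding layer of \thmref{main} replaces a product of $n$ binomials $1+x^z$ by a product of only $\tOh(1)$ block polynomials $1+\sum_{z\in B}x^z$, and the recursion restricts the usable subset sizes, so the generating polynomials that ever need to be handled have short relevant parts.

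First I would set up the algebraic engine. Choose $m$ slightly above the degree of the polynomial $F$ under consideration, a prime $q\equiv 1\pmod m$, and a primitive $m$-th root of unity $\omega\in\mathbb{F}_q$; then
\[
 [x^a]F\;\equiv\;\frac1m\sum_{k=0}^{m-1}\omega^{-ak}\,F(\omega^k)\pmod q,
\]
and every evaluation $F(\omega^k)$ is computed on the fly from $Z$ using $\tOh(1)$ field registers, independently of the other $k$'s, so the whole sum is accumulated in $\tOh(n\log t)$ bits of memory --- essentially the space to hold $Z$, the $O(\log n)$-deep recursion stack, and the random seeds used by the color-coding. Since $[x^a]F$ is a nonnegative integer of value at most $2^n$ (it injects into subsets of $Z$ with a fixed sum), it has at most $n$ prime factors, so a uniformly random $q$ from a range of size $m^{O(1)}$ keeps it nonzero modulo $q$ with constant probability; repeating with $O(\log(n+t))$ independent primes lowers the one-sided error to $(n+t)^{-\Omega(1)}$.

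The choice of $q$ is where the two regimes diverge. Under ERH the number of primes $\equiv 1\pmod m$ below $N$ is $\mathrm{li}(N)/\varphi(m)+O(\sqrt N\log(mN))$, so for $N=m^{O(1)}$ their density is $\tOmega(1/m)$; a random integer in $[1,N]$ is then such a prime with probability $\tOmega(1/m)$, primality testing needs only polylogarithmic space, and from $q$ one gets $\omega=a^{(q-1)/m}$ for random $a$, of order exactly $m$ with probability $\varphi(m)/m=\Omega(1/\log\log m)$; all of this takes $\tOh(m)=\tOh(t)$ time, absorbed into $\tOh(nt)$. Without ERH no comparably strong lower bound on primes in the progression $1\bmod m$ is available for $m=\tOmega(t)$, so I would split exponents into $O(1/\eps)$ digits in base $\approx t^\eps$ and run an $O(1/\eps)$-dimensional transform of side length $\approx t^\eps$: a prime $q\equiv 1\pmod{t^\eps}$, hence a root of unity of that order, can be found by exhaustive search in $t^{O(\eps)}$ time via Linnik's theorem, at the cost of $\tOh(t^\eps)$ extra working space for the per-axis data. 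This gives, for small constant $\eps$, time $\tOh(n\,t^{1+\eps})$ and space $\tOh(n\,t^\eps)$, and since both bounds are monotone in $\eps$ the statement follows for every constant $\eps>0$.

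The main obstacle is reconciling this with the algorithm of \thmref{main}, which keeps its polynomials of degree $\le t$ only by truncating modulo $x^{t+1}$ after every sumset step --- and truncation is incompatible with small-space pointwise evaluation, since a truncated product is not recoverable from the values of its factors and a length-$t$ coefficient vector does not fit in $\tOh(n\log t)$ space. The resolution is to reorganize the recursion so that every polynomial that is ever evaluated already has degree strictly below the modulus used for it; then ``reduce modulo $x^m-1$'' is faithful and an explicit truncation is never performed. Making this work requires two bounds to hold simultaneously: the largest transform length must stay $\tOh(t)$ --- bought by the $\tOh(1)$ color-coding blocks together with a magnitude- or element-sum-based partition that caps the degree each piece contributes --- and the total number of binomial factors over all subproblems must stay $\tOh(n)$, so that one evaluation costs $\tOh(n)$ and the whole transform costs $\tOh(nt)$. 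A direct implementation of these ideas only yields something like $\tOh(n^{1.5}t)$ (or worse if one is careless about empty blocks); squeezing it to $\tOh(nt)$ needs the full recursive decomposition of \thmref{main}, and verifying that the $O(\log n)$ depth, the per-level space, and these two global bounds all hold at once is the technical heart of the proof.
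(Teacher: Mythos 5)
Your ERH branch is essentially the paper's proof. You correctly identify the key obstruction — truncation after each sumset step is incompatible with pointwise evaluation in small space — and the paper resolves it exactly as you propose: it converts $\FasterSubsetSum$ into a circuit over $(\N^{t'},\boxplus,\boxtimes)$ with $t'=\Oh(t\log^3 n)$ chosen large enough that no convolution gate overflows, with only singleton constants, with $|C|=\tOh(n)$ gates, and with all entries bounded by $2^{\tOh(n)}$ (\lemref{FSScircuit}); it then runs a modular variant of Lokshtanov--Nederlof (\thmref{lokned}) with a random prime $p\equiv 1\pmod{t'}$ of size $t^{\Oh(1)}$ from a Titchmarsh/ERH-type bound, and finds a primitive $t'$\nobreakdash-th root of unity by a restart procedure exploiting that $t'$ is a power of two (\lemref{findingomega}). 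Your repeat-with-$O(\log)$-primes scheme versus the paper's single pick from a pool of $\tOmega(n^2)$ primes are interchangeable.

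The unconditional branch is where you depart from the paper, and I believe your route has a genuine gap. You propose to keep the prime small (only $q\equiv 1\pmod{t^\eps}$, via Linnik) and replace the length-$\tOh(t)$ DFT by an $O(1/\eps)$-dimensional transform with side length $\approx t^\eps$, encoding exponents by their base-$t^\eps$ digits. But the circuit's $\boxtimes$ gates are one-dimensional convolutions, and the digit encoding does not commute with them because of carries. Concretely, if $z=\sigma-1$ and $z'=1$ with $\sigma=t^\eps$, then $z+z'=\sigma$ has base-$\sigma$ digits $(0,1,0,\dots)$, whereas the digitwise sum of the encodings of $z$ and $z'$ is $(\sigma,0,\dots)$, which a cyclic side-$\sigma$ transform aliases to $(0,0,\dots)$; so the $d$-dimensional inverse transform does not recover $[x^t]\out{C}$. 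One cannot repair this by enlarging each axis to accommodate digit sums, since $n$ factors can push per-digit sums up to $n\sigma$, blowing the per-axis length back up; nor does a Cooley--Tukey or Good--Thomas factorization help, since both still require an $m$-th root of unity for the twiddle factors (or coprime axis lengths, which forces $q\equiv 1\pmod{t'}$ anyway). The paper sidesteps all of this by keeping the one-dimensional transform of length $t'=\tOh(t)$ and instead letting the prime grow: Siegel--Walfisz guarantees $\tOmega(n^2)$ primes $\equiv 1\pmod{t'}$ below $x=\exp(\tOh(t^\eps))$, so $\log p=\tOh(t^\eps)$, and the extra factor $t^\eps$ in time and space comes purely from the bit length of $p$, not from the number of evaluation points. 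Had your decomposition worked, it would in fact have given $\tOh(nt)$ unconditionally, beating the theorem — a further sign that a nontrivial obstacle is being skipped.

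Two smaller notes. The relevant coefficient magnitude is $2^{\tOh(n)}$ rather than $2^n$ (the repeated unions in $\ColorCoding$ inflate the sums of entries multiplicatively), but this changes only polylogarithmic factors in the prime-count bookkeeping. And your aside that ``a direct implementation only yields $\tOh(n^{1.5}t)$'' does not match the paper; once the circuit is built with $|C|=\tOh(n)$ and transform length $\tOh(t)$, the $\tOh(nt\log p)$ bound falls out directly from \thmref{lokned}, with no extra $\sqrt{n}$ loss.
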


We leave it as an open problem to find an algorithm with time $\tOh(n+t)$ and space $\tOh(n \log t)$.

\subsection{Techniques} \label{sec:techniques}

Our techniques are very different from previous improvements for \SubsetSum. In particular, all previous algorithms are deterministic.
Our near-linear time algorithm is simple and elegant and makes extensive use of randomization and the fast Fourier transform (FFT). 
In the following we discuss the main ingredients of our algorithms.

\paragraph{Sumset Computation}
For sets of integers $A,B$ we define the \emph{sumset} $A \pl B$ as the set of all sums $a+b$ with $a \in A \cup \{0\}$, $b \in B \cup \{0\}$. Note that here we allow to not choose any value in one or both sets by adding $\{0\}$. Often we consider the \emph{$t$-capped sumset} $A \plt B$, which is simply the restriction of $A \pl B$ to $\{0,\ldots,t\}$. 
Sumset computation is the most important primitive of our algorithm.
Easy reductions transform this problem to Boolean convolution and further to integer multiplication, which has well-known algorithms using FFT. This yields an algorithm for computing $A \plt B$ in time $\Oh(t \log t)$ (see \secref{preliminaries} for details). 

\paragraph{Color-Coding}

Color-coding is an algorithmic technique that was first used for the \textsc{$k$-Path} problem: Given a graph $G$, decide whether it contains a path of length $k$~\cite{alon1995color}. The idea is to randomly color the vertices of $G$ with $k$ colors, so that for a fixed path of length $k$ in $G$ with probability $1/k!$ it is colored $(1,2,\ldots,k)$, in which case we can find it by a simple dynamic programming algorithm on the layered graph obtained from keeping only the edges in $G$ from color class $i$ to $i+1$ (for each~$i$). Over $\Oh(k! \log n)$ repetitions we find a $k$-path with high probability, if one exists. 
Research on color-coding has led to various improvements and derandomizations of this technique~\cite{alon1995color,naor1995splitters,schmidt1990spatial}. The derandomizations also apply to our use of color-coding in this paper, however, other parts of our algorithm seem impossible to derandomize and we leave this as an open problem.

We make use of color-coding for finding all sums generated by small subsets. More precisely, given a \SubsetSum instance $(Z,t)$ and a threshold $k$, we compute a set $S \subseteq \SSB{Z}{t}$ containing any sum generated by a subset $Y \subseteq Z$ of size $|Y| \le k$ with constant probability. This can be boosted to any high probability by repeating and taking the union. The main trick is to randomly partition $Z = Z_1 \cup \ldots \cup Z_{k^2}$ by assigning to any element $z \in Z$ a color in $\{1,\ldots,k^2\}$ independently and uniformly at random. Consider the sumset $Z_1 \plt \ldots \plt Z_{k^2}$. Note that this set consists of sums generated by subsets of $Z$, in particular we never add up the same number twice since the $Z_i$ are disjoint. We say that the random partition \emph{splits} $Y$ if each $Z_i$ contains at most\footnote{In contrast to typical uses of color-coding, we want \emph{at most} one element instead of \emph{exactly} one element.} one element of~$Y$. In this case, the sum $\SUM(Y) := \sum_{y \in Y} y$ is contained in the sumset $Z_1 \plt \ldots \plt Z_{k^2}$. Indeed, by the definition of~$\plt$, in each position $i$ we can choose a number in $Z_i \cup \{0\}$, so if $Z_i$ contains exactly one element of~$Y$ then we can choose this element, while if $Z_i$ contains no element of $Y$ we can choose~0, to write $\SUM(Y)$ as a sum in the sumset $Z_1 \plt \ldots \plt Z_{k^2}$. It remains to argue that the random partition splits $Y$ with constant probability. The color-coding assigns each element in $Y$ a random color in $\{1,\ldots,k^2\}$, and thus we can view the partitioning, restricted to $Y$, as placing $k$ balls (the elements of $Y$) into $k^2$ bins (the subsets $Z_i$). This process is well understood, in particular the birthday paradox implies for our choice of $k^2$ bins that with constant probability some $Z_i$ contains more than one element of $Y$ -- which is a bad event in our situation. However, since the bound of the birthday paradox is tight, also with constant probability we have $|Z_i \cap Y| \le 1$ for all $i$, and thus the partitioning splits $Y$. Note that the running time for randomly partitioning and computing the sumset $Z_1 \plt \ldots \plt Z_{k^2}$ is $\tOh(n + k^2 t)$, which is near-linear in $n+t$ if $k$ is polylogarithmic in $n$ and~$t$.

\paragraph{Layer Splitting}
Any \SubsetSum instance $(Z,t)$ can be partitioned into $\log n$ \emph{layers} $Z_i \subseteq [t/2^i,t/2^{i-1}]$, plus a set $Z_0 \subseteq [0,t/n]$ that can be treated very similarly to layers and that we ignore here for simplicity. 
We obtain the desired sumset $\SSB{Z}{t}$ by combining the sumsets $\SSB{Z_i}{t}$ of the layers in a straight-forward way using sumset computations. 

It remains to compute $\SSB{Z_i}{t}$. By the definition of layers, any set $Y \subseteq Z_i$ summing to at most~$t$ has size $|Y| \le 2^i$. Thus, the color-coding algorithm with $k=2^i$ computes $\SSB{Z_i}{t}$, and it runs in time $\tOh(n + t k^2)$. To obtain an $\tOh(n+t)$ algorithm we need to eliminate the factor $k^2$.
We remove one factor $k$ by observing that all items in $Z_i$ are bounded by $\Oh(t/k)$, which allows us to implement the sumset computations in the color-coding algorithm more efficiently. The remaining factor $k$ stems from color-coding partitioning $Z_i$ into $k^2$ subsets. We remove this factor by a two-stage approach: In the first stage, we partition $Z_i$ into roughly $k = 2^i$ sets $Z_{i,j}$. This $k$ is too small to split $Y$ entirely, i.e., to have $|Z_{i,j} \cap Y| \le 1$ for all $j$ with high probability. Indeed, for this property we would need to partition $Z_i$ into $k^2$ sets. However, we still have  $|Z_{i,j} \cap Y| \le \Oh(\log n)$ with high probability. Hence, in the second stage we can run the color-coding algorithm with size bound $k' = \Oh(\log n)$ on each $Z_{i,j}$, and then combine their computed sumsets in a straight-forward way. This removes the factor\nobreakdash-$k$ overhead from partitioning into $k^2$ sets.
Carefully implementing these ideas yields time $\tOh(n+t)$.

\paragraph{Polynomial Space}
The polynomial space algorithm by Lokshtanov and Nederlof interprets a \SubsetSum algorithm as a circuit, where each gate performs the convolution or pointwise addition of two vectors of some fixed length $f(n,t)$. This circuit $C$ is transfered to the Fourier domain by replacing every convolution gate by pointwise multiplication. The new circuit $C'$ computes the Fourier transform of the output vector of $C$.
Using that the inverse Fourier transform can be written as a simple sum, we can evaluate an entry of $C$ by evaluating all entries of $C'$ one-by-one. Writing $g(n,t)$ for the number of gates in $C$, since all operations in $C'$ are pointwise, computing an entry of $C'$ can be done in $\tOh(g(n,t))$ arithmetic operations, storing $\tOh(g(n,t))$ numbers. Evaluating all entries of $C'$ one-by-one thus can be done in $\tOh(g(n,t)\cdot f(n,t))$ arithmetic operations, storing $\tOh(g(n,t))$ numbers.

One problem in the algorithm by Lokshtanov and Nederlof is that entries can become as large as $2^{\Omega(n)}$, so they need to work with $\tOh(n)$ bits of precision. This yields total time $\tOh(n \cdot g(n,t) f(n,t))$ and space $\tOh(n \cdot g(n,t))$. We work with a variant of their algorithm using modular arithmetic instead of complex numbers.
This allows us to work \emph{modulo a random prime $p$}, thus reducing the precision from $\tOh(n)$ to $\Oh(\log p)$ bits, and improving time and space by a factor $\tOh(n / \log p)$. A technical difficulty is that for using the Fourier transform the field $\Z_p$ has to contain a primitive $t$\nobreakdash-th root of unity. This is guaranteed by choosing $p$ in the arithmetic progression $1 + t \cdot \N$. However, to be able to choose $p$ at random from a sufficiently large ground set, we need to choose a threshold~$x$ such that there are many primes $p \le x$ in the arithmetic progression $1 + t \cdot \N$. This requires a quantitative version of Dirichlet's theorem. The best such result assumes ERH and allows us to choose $p \le x = n^{\Oh(1)}$, yielding an improvement factor $\tOh(n/\log p) = \tOh(n)$. 

The specific \SubsetSum circuit of Lokshtanov and Nederlof needs $g(n,t)=\Oh(n)$ gates and vector length $f(n,t) = \tOh(nt)$. A priori it is not easy to improve $f(n,t)$ to $\tOh(t)$. We show that the circuit induced by our new \SubsetSum algorithm allows us to set $f(n,t) = \tOh(t)$ and $g(n,t) = \tOh(n)$, thus improving the running time by another factor $\tOh(n)$.

\subsection{Further Related Work}


\SubsetSum has been studied extensively, see e.g.~\cite{kellererknapsack}. The best-known running time in terms of $n$ is $\Oh^*(2^{n/2})$~\cite{horowitz1974computing}. There is a large literature improving this running time for inputs that are in some sense ``structured'', see, e.g., \cite{austrin2015subset} and the references therein.
Using number-theoretic arguments, certain \emph{dense} cases of \SubsetSum are solvable in near-linear time, e.g., if $t \ll n^2$ then there is an $\tOh(n)$ algorithm~\cite{galil1991almost}. A \emph{$(1-\eps)$-approximation} for \SubsetSum yields a set with sum in $[(1-\varepsilon) t, t]$ if one exists. 
The best known approximation algorithm runs in time $\tOh(\min\{n/\eps, n + 1/\eps^2\})$~\cite{lawler1979fast,gens1980fast,kellerer2003efficient}.

\section{Preliminaries}
\label{sec:preliminaries}

All logarithms ($\log$) in this paper are base 2. 

\paragraph{Sums and Sumsets} For a set $S$ of integers we denote the sum of its elements by $\SUM(S) := \sum_{s \in S} s$. Given a set $Z$ of $n$ positive integers and target~$t$, the \SubsetSum problem asks whether there exists $Y \subseteq Z$ with $\SUM(Y) = t$. We often solve the more general problem of computing the set of all subset sums of $Z$ bounded by $t$, i.e.,
$$\SSB{Z}{t} := \{\SUM(Y) \mid Y \subseteq Z\} \cap \{0,\ldots,t\}. $$

We represent a set $S$ of integers in $\{0,\ldots,m\}$ by its characteristic vector of length $m+1$.
For sets $A,B$ of non-negative integers, we define their \emph{sumset}, in a slightly non-standard way, as the set of all sums of at most one element of $A$ and at most one element of $B$, i.e., 
$$A\pl B = \{a+b \mid a \in A \cup \{0\}, \, b \in B \cup \{0\}\}.$$
For any $t>0$, we define the \emph{$t$-capped sumset} as $A \plt B := (A \pl B) \cap \{0,\ldots,t\}$.

\paragraph{Sumset Computation}
For Boolean vectors $x,y$ of length $t$ we define their convolution as the Boolean vector $z$ of length $2t$ with $z_i = \bigvee_{j =1}^i x_j \wedge y_{i-j}$, where out-of-bounds values are interpreted as false. Observe that the convolution of the characteristic vectors of integer sets $A,B$ equals the characteristic vector of the sumset $A \pl B$ (if $0 \in A,B$). Thus, sumset computation can be reduced to Boolean convolution. 

A simple algorithm for Boolean convolution is to reduce to integer multiplication: Convert the Boolean vector $x$ to a number by interpreting true as $1$ and false as $0$ and padding with $\log t$ zeroes between any two bits of $x$. Do the same with $y$. Then from the product of the constructed numbers we can infer $z$, since a block of $1+\log t$ bits is identically 0 if and only if the corresponding bit of~$z$ is 0. Hence, if we can multiply two $t$-bit numbers in time $M(t)$, then Boolean convolution and thus sumset computation can be performed in time $\Oh(M(t \log t))$. The best known bound for $M(t)$ depends on the machine model. For simplicity, in this paper we work on the RAM model with cell size $w = \Theta(\log t)$, where typical operations on $\log(t)$-bit numbers can be performed in constant time. In this model it is known that $M(t) = \Oh(t)$, see, e.g.,~\cite{furer2014fast}\footnote{On other machine models $M(t)$ can be larger, e.g., on 2\nobreakdash-tape Turing machines or as circuits we have $M(t) \le t \log t \cdot 2^{\Oh(\log^* t)}$~\cite{furer2009faster,de2013fast}.}.
This yields the following:

\begin{proposition} \label{propo:sumsetcomputation}
  Given sets $A,B$ of non-negative integers, the $t$-capped sumset $A \plt B$ can be computed in time $\Oh(t \log t)$. 
\end{proposition}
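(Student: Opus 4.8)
The plan is to follow the two reductions sketched in the preliminaries: reduce the $t$-capped sumset to a Boolean convolution of length-$(t+1)$ vectors, reduce that convolution to a single multiplication of $\Oh(t\log t)$-bit integers, and then invoke the bound $M(m)=\Oh(m)$ available on our RAM model.

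First I would preprocess the inputs. Given the characteristic vectors of $A$ and $B$, discard every coordinate of index larger than $t$ (such elements cannot contribute to $A\plt B$) and set the $0$-th coordinate of each to true, obtaining Boolean vectors $x,y$ of length $t+1$ that are the characteristic vectors of $(A\cup\{0\})\cap\{0,\dots,t\}$ and $(B\cup\{0\})\cap\{0,\dots,t\}$. I then claim that the characteristic vector of $A\plt B$ is exactly the length-$(t+1)$ prefix of the Boolean convolution $z$ of $x$ and $y$: indeed, for $i\le t$ all indices occurring in $z_i=\bigvee_{j=0}^{i}x_j\wedge y_{i-j}$ lie in $\{0,\dots,t\}$, and $z_i$ is true iff $i=a+b$ for some $a\in A\cup\{0\}$, $b\in B\cup\{0\}$, i.e.\ iff $i\in A\pl B$.

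Next I would compute the first $t+1$ entries of $z$ via integer multiplication. Fix a block size $B:=1+\lceil\log(t+1)\rceil$ and encode $x$ as $\hat x:=\sum_j x_j\,2^{Bj}$ (writing each bit of $x$ in its own block of $B$ bits, so consecutive bits are separated by $\lceil\log(t+1)\rceil$ zero bits), and likewise $\hat y$; both are $\Oh(t\log t)$-bit integers. In the product $\hat x\cdot\hat y$, the $B$-bit block at position $i$ holds the integer $\sum_{j=0}^{i}x_j y_{i-j}$, which is at most $t+1<2^{B}$ and therefore cannot carry into the adjacent block. Hence $z_i$ is true iff the $i$-th block of $\hat x\cdot\hat y$ is nonzero, and reading these blocks for $i\le t$ yields the characteristic vector of $A\plt B$.

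For the running time: the preprocessing, the construction of $\hat x$ and $\hat y$, and the final read-off each touch $\Oh(t\log t)$ bits, hence $\Oh(t\log t / w)=\Oh(t)$ words since $w=\Theta(\log t)$; and the one multiplication of two $\Oh(t\log t)$-bit numbers costs $M(\Oh(t\log t))=\Oh(t\log t)$. Summing gives the claimed $\Oh(t\log t)$ bound. There is no genuine obstacle in this argument; the only point requiring a moment of care is verifying that $\lceil\log(t+1)\rceil$ padding bits per block suffice to prevent carries between blocks, which is immediate from the crude estimate $\sum_{j=0}^{i}x_j y_{i-j}\le i+1\le t+1$.
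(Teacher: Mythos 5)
Your proof is correct and follows exactly the route sketched in the paper's preliminaries: truncate and include $0$, reduce the $t$-capped sumset to a Boolean convolution of length-$(t+1)$ vectors, encode with $\Theta(\log t)$-bit padded blocks, perform one integer multiplication of $\Oh(t\log t)$-bit numbers, and invoke $M(m)=\Oh(m)$ on the word RAM. You merely spell out the carry-prevention and read-off details that the paper leaves implicit.
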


%

\paragraph{Preprocessing Multisets} 
So far we assumed that the input $Z$ is a \emph{set}.  
Note that it also makes sense to allow $Z$ to be a \emph{multi-set}, and define a subset $Y \subseteq Z$ to be a multi-set where each $z \in Z$ has multiplicity in $Y$ at most its multiplicity in $Z$. Thus, any $z \in Z$ with multiplicity $m$ may be used at most $m$ times in any subset sum. 
E.L.~Lawler~\cite{lawler1979fast} showed that the general case of multi-sets can be reduced to multi-sets with multiplicities bounded by 2. To this end, for any $z \in Z$ with multiplicity $2k+1$ we reduce its multiplicity to 1 and add $k$ times the integer $2z$ to $Z$ (i.e. increase the multiplicity of $2z$ by $k$). Observe that the resulting set generates the same subset sums, since we have $\{i \cdot z \mid 0 \le i \le 2k+1\} = \{i \cdot z + j \cdot 2z \mid 0 \le i \le 1 \text{ and } 0 \le j \le k\}$. Similarly, if $z$ has multiplicity $2k+2$ we reduce its multiplicity to 2 and add $k$ times $2z$. Repeating this step for all $z \in Z$ (in increasing order) yields an equivalent \SubsetSum instance where all multiplicities are bounded by 2, so we obtain the following. 
For a multi-set $Z$, by the \emph{size} $|Z|$ we denote the sum of all multiplicities of elements in $Z$.

\begin{proposition} \label{propo:lawler}
  Given a \SubsetSum instance $(Z,t)$, where $Z$ is a multi-set of size $n$, we can in time $\Oh(n+t)$ compute an equivalent instance $(Z',t)$ where $Z'$ is a multi-set with multiplicities bounded by 2 and $|Z'| \le \min\{n,2t\}$.
\end{proposition}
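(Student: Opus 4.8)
The plan is to implement Lawler's reduction directly and track the multiplicities and running time carefully. First I would process the elements $z \in Z$ in increasing order of value. For the current element $z$ with current multiplicity $m$ in the working multi-set, if $m = 2k+1$ is odd we replace $z$ by a single copy of $z$ together with $k$ new copies of $2z$ (added to the working multi-set, possibly merging with copies of $2z$ already present or to be created later); if $m = 2k+2$ is even we keep $2$ copies of $z$ and add $k$ new copies of $2z$. The algebraic identity $\{i z \mid 0 \le i \le m\} = \{i z + j(2z) \mid 0 \le i \le r,\ 0 \le j \le k\}$, with $r \in \{1,2\}$ chosen by parity, shows that the set of attainable partial sums using the copies of $z$ is unchanged, so the overall \SubsetSum instance is equivalent; here I would invoke exactly the identity already displayed in the excerpt.

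Next I would argue termination and the multiplicity bound. Since we process values in increasing order and each step for value $z$ only creates copies of the strictly larger value $2z$, every value is touched exactly once, and after its step its multiplicity is permanently at most $2$. Thus the output $Z'$ has all multiplicities bounded by $2$. To bound $|Z'|$: on one hand, the reduction never increases the sum $\SUM(Z)$ — replacing $k$ copies of $z$ (beyond the kept $r$ copies) by $k$ copies of $2z$ and then possibly halving again preserves $\SUM$, and more carefully each individual step preserves $\SUM$ of the working multi-set, so $\SUM(Z') = \SUM(Z)$ is not larger; actually the cleaner statement is that $\SUM(Z')$ only ever decreases or stays equal, hence $|Z'| \le \SUM(Z') \le \SUM(Z)$ — no wait, that gives a bound in terms of $\SUM(Z)$, not $2t$. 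The right argument for the $2t$ bound is: we may first discard any element exceeding $t$ (it can never participate in a subset sum bounded by $t$), and similarly cap multiplicities so that no element contributes more than $t$ to a sum, i.e. keep multiplicity of $z$ at most $\lceil t/z \rceil$ after the reduction; then $|Z'| = \sum_{z \in Z'} (\text{mult of } z) \le \sum_{z \in Z'} \min\{2, \lceil t/z\rceil\}$, and since the distinct values in $Z'$ are distinct positive integers, $\sum_z \lceil t/z \rceil \le 2t \sum_z 1/z$ restricted appropriately — the crude bound $|Z'| \le 2t$ follows because each of at most $t$ distinct values contributes at most $2$, so $|Z'| \le \min\{n, 2t\}$ once we also note the reduction never increases the number of copies beyond what is needed, i.e. $|Z'| \le |Z| = n$.

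Finally I would address the running time. A naive element-by-element simulation could blow up, because a single element of huge multiplicity spawns many copies; the key is to represent multiplicities in binary and never materialize copies one at a time. Maintain an array indexed by value $v \in \{1,\ldots,t\}$ storing the current multiplicity $m_v$; iterate $v = 1, 2, \ldots, t$, and at step $v$ set the output multiplicity of $v$ to $\min\{m_v, 2\}$ (or $\min\{m_v, 2, \lceil t/v\rceil\}$), compute $k = \lfloor (m_v - r)/2 \rfloor$ with $r$ the retained amount, and do $m_{2v} \mathrel{+}= k$ (capped at, say, $t$ since larger multiplicities are irrelevant). Each value is processed in $\Oh(1)$ time after the $\Oh(n)$ cost of building the initial array and the $\Oh(t)$ cost of scanning, giving total time $\Oh(n + t)$. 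The main obstacle, and the only part requiring genuine care, is precisely this efficiency point together with the $|Z'| \le 2t$ bound: one must observe that capping multiplicities at $\Oh(t/v)$ (equivalently discarding provably useless copies) is harmless for the \SubsetSum instance, and that values exceeding $t$ can be dropped, so that the whole computation stays linear and the output size is controlled; the equivalence of the instance under these cappings is the one substantive thing to verify, while the arithmetic identity for the $2z$-substitution is already supplied.
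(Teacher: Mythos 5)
Your proof is correct and takes essentially the same route as the paper: process values in increasing order, replace multiplicity $2k+r$ ($r\in\{1,2\}$) of $z$ by $r$ copies of $z$ plus $k$ copies of $2z$, and observe that increasing order ensures each value is finalized once. The paper states the $\Oh(n+t)$ time and $|Z'|\le\min\{n,2t\}$ bounds without elaboration, so your explicit array-indexed implementation and the two-part size accounting (distinct values in $\{1,\ldots,t\}$ each with multiplicity $\le 2$, and each replacement step strictly not increasing the total count) are a sound way to make those claims precise, though the abandoned $\SUM$-based detour should be cut from a final write-up.
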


Koiliaris and Xu~\cite{KoiliarisX15} extended this preprocessing as follows. Write the multi-set $Z'$ given by \proporef{lawler} as a union of two \emph{sets} $Z_1,Z_2 \subseteq \{0,\ldots,t\}$. Compute $\SSB{Z_1}{t}$ and $\SSB{Z_2}{t}$. Then we obtain $\SSB{Z'}{t}$ as $\SSB{Z_1}{t} \plt \SSB{Z_1}{t}$. This yields a reduction to sets, running in time $\Oh(n + t \log t)$. A disadvantage is that we have to compute the whole sets $\SSB{Z_{1}}{t},\SSB{Z_{2}}{t}$ instead of just solving a decision problem, but this is irrelevant for the algorithms presented in this paper. 

By running this reduction as a preprocessing of all our algorithms, throughout the paper we can assume that the input $Z \subseteq \{0,\ldots,t\}$ is a set.

\section{Near-Linear Time Algorithm}
\label{sec:simple}

Our goal is to compute, given a set $Z$ of $n$ positive integers and target~$t$, the set $\SSB{Z}{t} := \{\SUM(Y) \mid Y \subseteq Z\} \cap \{0,\ldots,t\}$, since checking whether $t \in \SSB{Z}{t}$ decides the given \SubsetSum instance. In this section, we design a simple algorithm that solves \SubsetSum in time $\tOh(n + t)$, proving \thmref{main}. 
As discussed in \secref{techniques}, our algorithm consists of two parts: We first show how to find sums generated by small subsets using color-coding in \secref{colorcodingone}, and then use a two-stage approach on layers for finding all subset sums in \secref{layers}.

\subsection{Color-Coding: An algorithm for small solution size}
\label{sec:colorcodingone}

We describe an algorithm $\ColorCoding$ (see below) for solving \SubsetSum if the solution size is small, i.e., an algorithm that finds all sums $\SUM(Y) \le t$ generated by sets $Y \subseteq Z$ of size $|Y| \le k$, for some given (small)~$k$. 
We \emph{randomly partition} $Z = Z_1 \cup \ldots \cup Z_{k^2}$, i.e., we assign any $z \in Z$ to a set $Z_i$ where $i$ is chosen independently and uniformly at random in $\{1,\ldots,k^2\}$. We say that this random partition \emph{splits} $Y$ if $|Y \cap Z_i| \le 1$ holds for all $1 \le i \le k^2$. If this happens, then the sumset $Z_1 \plt \ldots \plt Z_{k^2}$ contains $\SUM(Y)$. Indeed, by definition of $\pl$ in each position $i$ we can choose a number in $Z_i \cup \{0\}$, so for $|Y \cap Z_i| = 1$ we can choose the unique number in the set $Y \cap Z_i$, while for $|Y \cap Z_i| = 0$ we can choose~0, to generate $\SUM(Y)$ as a sum in $Z_1 \plt \ldots \plt Z_{k^2}$. Also note that the sumset $Z_1 \plt \ldots \plt Z_{k^2}$ only contains valid sums in $\SSB{Z}{t}$, since no $z \in Z$ may be used twice. 

Repeating this procedure sufficiently often with fresh randomness, and taking the union over all computed sumsets $Z_1 \plt \ldots \plt Z_{k^2}$, yields a set $S \subseteq \SSB{Z}{t}$ containing any $\SUM(Y) \le t$ with $Y \subseteq Z$ and $|Y| \le k$ with probability at least $1-\delta$.
In fact, $\Oh(\log 1/\delta)$ repetitions are sufficient, since a random partition splits $Y$ with constant probability, which follows from the birthday paradox, or more precisely the tightness of the birthday paradox bound. We briefly prove this standard claim for completeness. 
Since $Z \subseteq Z_1 \plt \ldots \plt Z_{k^2}$, we can assume $k \ge 2$.
For any $Y \subseteq Z$ with $|Y| \le k$, the probability of the random partition splitting $Y$ is the same as the probability of $|Y|$ balls falling into $|Y|$ different bins, when throwing $|Y|$ balls into $k^2$ bins. This is equivalent to the second ball falling into a different bin than the first one, the third ball falling into a different bin than the first two, and so on, which has probability
  \begin{align*}
  &\frac{k^2-1}{k^2} \cdot \frac{k^2-2}{k^2} \ldots \frac{k^2-(|Y|-1)}{k^2}  \IfSODA{\\
  &}\ge \bigg(\frac{k^2-(|Y|-1)}{k^2}\bigg)^{|Y|} \ge \Big(1-\frac 1k\Big)^{k} \ge \Big(\frac 12\Big)^{2} = \frac 14.
  \end{align*}
Hence, $r := \lceil \log_{4/3}(1/\delta) \rceil$ repetitions yield the desired success probability of $1 - (1-1/4)^{r} \ge 1-\delta$. This finishes the analysis of $\ColorCoding$ and proves the following lemma. For the running time, note that we perform $\Oh(\log 1/\delta)$ repetitions of computing $k^2$ sumsets, each taking time $\Oh(t \log t)$.

\begin{lem} \label{lem:colorcoding}
  $\ColorCoding(Z,t,k,\delta)$ computes in time $\Oh(t k^2 \log t \log (1/\delta))$ a set $S \subseteq \SSB{Z}{t}$ such that for any $Y \subseteq Z$ with $|Y| \le k$ and $\SUM(Y) \le t$ we have $\SUM(Y) \in S$ with probability $\ge 1-\delta$.
\end{lem}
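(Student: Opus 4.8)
The plan is to verify the three ingredients of $\ColorCoding$ in turn: correctness of each computed sumset (the containment $S \subseteq \SSB{Z}{t}$), the per-round probability that a fixed small $Y$ is captured, and the running time over the repetitions. First I would pin down the algorithm precisely: it runs $r := \lceil \log_{4/3}(1/\delta)\rceil$ independent rounds; in each round it draws a uniformly random coloring of $Z$ with colors in $\{1,\dots,k^2\}$, lets $Z_1,\dots,Z_{k^2}$ be the color classes, computes the iterated $t$-capped sumset $Z_1 \plt \cdots \plt Z_{k^2}$ by folding left and repeatedly invoking \proporef{sumsetcomputation}, and outputs the union $S$ of the $r$ resulting sets.

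For the containment $S \subseteq \SSB{Z}{t}$ I would use that the $Z_i$ are pairwise disjoint: every value in $Z_1 \pl \cdots \pl Z_{k^2}$ has the form $\sum_i z_i$ with each $z_i \in Z_i \cup \{0\}$, so it equals $\SUM(Y')$ where $Y'$ is the set of nonzero choices --- a genuine subset of $Z$, since the $z_i$ are drawn from disjoint sets and hence distinct; the $t$-cap then removes everything outside $\{0,\dots,t\}$, leaving a subset of $\SSB{Z}{t}$.

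Next, for a fixed $Y \subseteq Z$ with $|Y| \le k$ and $\SUM(Y) \le t$, call a round \emph{good} if its coloring puts at most one element of $Y$ in each color class. In a good round, choosing from each $Z_i$ the unique element of $Y \cap Z_i$ when it exists and $0$ otherwise writes $\SUM(Y)$ as an element of $Z_1 \pl \cdots \pl Z_{k^2}$; since $\SUM(Y) \le t$ it survives the cap, so $\SUM(Y)$ lies in that round's sumset. Restricted to $Y$, the coloring is exactly throwing $|Y|$ balls into $k^2$ bins, so the probability of a good round is $\prod_{i=1}^{|Y|-1}(1 - i/k^2)$; I would lower-bound each factor by $1 - (|Y|-1)/k^2 \ge 1 - 1/k$ (using $|Y| \le k$), and then, since the base is $\le 1$ and $|Y| \le k$, deduce a success probability $\ge (1-1/k)^{|Y|} \ge (1-1/k)^{k} \ge 1/4$, the last step using that $(1-1/k)^k$ is increasing in $k$ for $k \ge 2$; the case $k \le 1$ is immediate since $Z$ itself always lies in the sumset. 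Independence of the $r$ rounds then gives $\Pr[\SUM(Y) \notin S] \le (3/4)^r \le \delta$ by the choice of $r$.

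For the running time, each round spends $\Oh(n)$ sampling the coloring and does $k^2 - 1$ sumset computations over $\{0,\dots,t\}$, each costing $\Oh(t \log t)$ by \proporef{sumsetcomputation}; multiplying by $r = \Oh(\log(1/\delta))$ rounds yields $\Oh(t k^2 \log t \log(1/\delta))$, the additive coloring cost being absorbed since we may assume $n \le t+1$ after the preprocessing of \secref{preliminaries} (or it can be reported separately as $\Oh(n)$). I expect the only delicate point to be the probability estimate: the birthday bound is needed here in its ``all distinct'' (lower-bound) direction rather than the usual collision direction, so the chain $1 - i/k^2 \ge 1 - 1/k$ for $i \le |Y|-1 \le k-1$ together with the monotonicity of $(1-1/k)^k$ must be arranged in the right order; everything else is routine bookkeeping.
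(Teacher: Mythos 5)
Your proof is correct and follows essentially the same route as the paper: the same birthday-paradox lower bound of $1/4$ per round via $\prod_{i<|Y|}(1-i/k^2) \ge (1-1/k)^k \ge 1/4$ for $k\ge 2$, the same union-over-$r$-rounds argument with $r = \lceil\log_{4/3}(1/\delta)\rceil$, and the same appeal to disjointness of the color classes for the containment $S \subseteq \SSB{Z}{t}$. The only (harmless) difference is that you explicitly account for the $\Oh(n)$ coloring cost and note it is absorbed because the preprocessing of \secref{preliminaries} ensures $n \le t+1$, a point the paper's proof leaves implicit.
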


\begin{algorithm}
\caption{$\ColorCoding(Z,t,k,\delta)$: Given a set $Z$ of positive integers, target~$t$, size bound $k\ge 1$ and error probability $\delta > 0$, we solve \SubsetSum with solution size at most $k$}\label{alg:colorcoding}
\begin{algorithmic}[1]
\For {$j=1,\ldots,\lceil \log_{4/3}(1/\delta) \rceil$}
  \State randomly partition $Z = Z_1 \cup \ldots \cup Z_{k^2}$
  \State $S_j := Z_1 \plt \ldots \plt Z_{k^2}$
\EndFor
\State \Return $\bigcup_j S_{j}$
\end{algorithmic}
\end{algorithm}

Standard techniques allow to derandomize this algorithm by iterating over a (deterministic) family of partitions of $Z$ that is guaranteed to contain a partition splitting $Y$. This comes at the cost of an increased polynomial factor in~$k$, e.g., \cite[Lemma 2]{naor1995splitters} gives a factor $\Oh(k^6 \log k)$. Alternatively, Koiliaris and Xu provide a very different algorithm running in deterministic time $\Oh(t k^2 \log (tk) \log n)$~\cite[Lemma~2.12]{KoiliarisX15}.

\subsection{Layer Splitting}
\label{sec:layers}

Let $(Z,t)$ be a \SubsetSum instance and $|Z|=n$. For $\ell \ge 1$, we call $(Z,t)$ an \emph{$\ell$\nobreakdash-layer} instance if 
$$ Z \subseteq [t/\ell,2t/\ell] \qquad\qquad \text{or} \qquad\qquad Z \subseteq [0,2t/\ell] \; \text{ and } \; \ell \ge n. $$
In both cases we have $Z \subseteq [0,2t/\ell]$. Moreover, observe that any $Y \subseteq Z$ summing to at most~$t$ has size $|Y| \le \ell$. In the second case, this holds since $|Y| \le |Z| = n$. 
Thus, the $\ColorCoding$ algorithm from the last section with size bound $k = \ell$ solves $\ell$-layer instances. In this section, we will show that the running time of $\ColorCoding$ can be improved for layers, thereby essentially removing the quadratic dependence on $\ell$ entirely. 

However, before discussing how to solve \SubsetSum on layers, we show that any given instance $(Z,t)$ can be split into $\Oh(\log n)$ layers, see \algref{fastersubsetsum} below. 
We simply split set $Z$ at $t/2^i$ for $i=1,\ldots,\lceil \log n \rceil -1$; this yields $\Oh(\log n)$ layers $Z_1,\ldots,Z_{\lceil \log n \rceil}$.  On each layer we then run the algorithm $\ColorCodingLayer$ presented below, and we combine the resulting sumsets $S_i$ in a straight-forward way. The error probabilities of the calls to $\ColorCodingLayer$ are chosen sufficiently small so that they sum up to at most~$\delta$. Correctness and the running time bound $\Oh(t \log t \log^3(n/\delta) \log n)$ immediate follow from \lemref{colorcodinglayer} below. This yields algorithm $\FasterSubsetSum$ below that proves \thmref{main}.

\begin{algorithm}
\caption{$\FasterSubsetSum(Z,t,\delta)$: Returns a set $S \subseteq \SSB{Z}{t}$ containing any $s \in \SSB{Z}{t}$ with probability at least $1-\delta$, and runs in time $\Oh(t \log t \log^3(n/\delta) \log n)$}
\label{alg:fastersubsetsum}
\begin{algorithmic}[1] 
\State split $Z$ into $Z_i := Z \cap (t/2^{i},t/2^{i-1}]$ for $i=1,\ldots,\lceil \log n \rceil - 1$, and $Z_{\lceil \log n \rceil} := Z \cap [0,t/2^{\lceil \log n \rceil -1}]$ 
\State $S = \emptyset$
\For {$i=1,\ldots, \lceil \log n \rceil$}
  \State $S_i := \ColorCodingLayer(Z_i,t,2^{i},\delta / \lceil \log n \rceil)$
  \State $S := S \plt S_i$
\EndFor
\State \Return $S$
\end{algorithmic}
\end{algorithm}


It remains to design a fast algorithm given an $\ell$-layer instance $(Z,t)$ and error probability~$\delta$.
Let $m := \ell/\log(\ell/\delta)$ rounded up to the next power of~2.
We randomly partition $Z$ into  subsets $Z_1,\ldots,Z_m$. For each $Z_j$ we run $\ColorCoding$ with size bound $k = 6 \log(\ell/\delta)$, target $12 \log(\ell/\delta) t/\ell$, and error probability $\delta/\ell$, yielding a set~$S_j$. 
We combine the sets $S_1, \ldots, S_m$ in a natural, binary-tree-like way by computing $S_1 \pl S_2, S_3 \pl S_4, \ldots, S_{m-1} \pl S_m$ in the first round, $S_1 \pl S_2 \pl S_3 \pl S_4, \ldots, S_{m-3} \pl S_{m-2} \pl S_{m-1} \pl S_m$ in the second round, and so on, until we reach the set $S_1 \pl \ldots \pl S_m$. Note that in the $h$-th round of this procedure we combine $2^h$ sets $S_j$, initially containing integers bounded by $12 \log(\ell/\delta) t/\ell$. Thus, we may use $\pl[2^h \cdot 12 \log(\ell/\delta) t/\ell]$ in the $h$-th round. This explains the following algorithm.

\begin{algorithm}
\caption{$\ColorCodingLayer(Z,t,\ell,\delta)$: See \lemref{colorcodinglayer} for guarantees
}\label{alg:colorcodinglayer}
\begin{algorithmic}[1]
\State \textbf{if} $\ell < \log(\ell/\delta)$ \textbf{then return} $\ColorCoding(Z,t,\ell,\delta)$
\State $m := \ell/\log(\ell/\delta)$ rounded up to the next power of 2
\State randomly partition $Z = Z_1 \cup \ldots \cup Z_{m}$
\State $\gamma := 6 \log(\ell/\delta)$
\For {$j=1,\ldots,m$}
  \State $S_j := \ColorCoding(Z_j,2 \gamma t/\ell, \gamma, \delta/\ell)$
\EndFor
\For {$h=1,\ldots,\log m$} \IfNotSODA{\Comment{combine the sumsets $S_j$ in a binary-tree-like way}}
  \For {$j=1,\ldots,m / 2^{h}$}
    \State $S_j := S_{2j-1} \pl[2^h \cdot 2 \gamma t/\ell] S_{2j}$
  \EndFor
\EndFor
\State \Return $S_{1} \cap \{0,\ldots,t\}$
\end{algorithmic}
\end{algorithm}

\begin{lem} \label{lem:colorcodinglayer}
  For an $\ell$-layer instance $(Z,t)$ and $\delta \in (0,1/4]$, the method $\ColorCodingLayer(Z,t,\ell,\delta)$ computes in time $\Oh(t \log t \log^3(\ell/\delta))$ a set $S \subseteq \SSB{Z}{t}$ containing any $s \in \SSB{Z}{t}$ with probability at least $1-\delta$.
\end{lem}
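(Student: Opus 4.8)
The plan is to analyze \algref{colorcodinglayer} by establishing three things: (i) every element returned is a genuine subset sum, (ii) the running time is as claimed, and (iii) a fixed target sum $s = \SUM(Y)$ with $Y \subseteq Z$ and $s \le t$ survives with probability at least $1-\delta$. For (i), observe inductively that each $S_j$ produced by $\ColorCoding$ lies in $\SSB{Z_j}{\cdot}$ by \lemref{colorcoding}, and that the sumset operations $\pl[\cdot]$ only ever combine sets over \emph{disjoint} blocks $Z_j$ (since the $Z_j$ partition $Z$), so after all the combination rounds $S_1 \subseteq \SSB{Z}{\cdot}$; intersecting with $\{0,\dots,t\}$ gives $S \subseteq \SSB{Z}{t}$. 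The base case $\ell < \log(\ell/\delta)$ is handled directly by \lemref{colorcoding}.

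For (ii), the running time, I would first note $m = \Theta(\ell/\log(\ell/\delta))$ and $\gamma = \Theta(\log(\ell/\delta))$. Each of the $m$ calls $\ColorCoding(Z_j, 2\gamma t/\ell, \gamma, \delta/\ell)$ runs, by \lemref{colorcoding}, in time $\Oh\big((\gamma t/\ell)\cdot \gamma^2 \log(\gamma t/\ell)\log(\ell/\delta)\big) = \Oh\big((t/\ell)\log^3(\ell/\delta)\log t\big)$ (using that the cap passed is $\Oh(\gamma t/\ell) \le t$ and each individual number in $Z_j \subseteq Z$ is $\le 2t/\ell$, so sums are capped at $\Oh(\gamma t/\ell)$ — hence the $\log(\cdot)$ factor in the convolution cost is $\Oh(\log t)$). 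Summing over $j=1,\dots,m$ gives $\Oh\big(t\log t \log^2(\ell/\delta)\big)$ for the whole first loop. For the combination loop, in round $h$ there are $m/2^h$ sumset computations, each with cap $2^h \cdot 2\gamma t/\ell$, costing $\Oh\big((2^h\gamma t/\ell)\log t\big)$ by \proporef{sumsetcomputation}; the product $(m/2^h)\cdot 2^h = m$ is independent of $h$, so each of the $\log m$ rounds costs $\Oh\big(m \gamma t/\ell \cdot \log t\big) = \Oh\big(t\log t\log(\ell/\delta)\big)$, and over all $\Oh(\log m) = \Oh(\log \ell)$ rounds this is $\Oh\big(t\log t \log(\ell/\delta)\log \ell\big)$. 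Both terms are dominated by $\Oh\big(t\log t\log^3(\ell/\delta)\big)$, as claimed.

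For (iii), the correctness, fix $Y \subseteq Z$ with $\SUM(Y) \le t$; since $(Z,t)$ is an $\ell$-layer instance we have $|Y| \le \ell$. After the random partition $Z = Z_1 \cup \dots \cup Z_m$, each $Y_j := Y \cap Z_j$ is obtained by throwing $|Y| \le \ell$ balls into $m = \Theta(\ell/\log(\ell/\delta))$ bins. The key event is that \emph{every} bin receives at most $\gamma = 6\log(\ell/\delta)$ balls, i.e. $|Y_j| \le \gamma$ for all $j$; a Chernoff bound on each $|Y_j|$ (whose mean is $|Y|/m = \Oh(\log(\ell/\delta))$) together with a union bound over the $m \le \ell$ bins shows this fails with probability at most, say, $\delta/2$. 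Conditioned on this, each $Y_j$ has size $\le \gamma$, and since $Y_j \subseteq Z_j \subseteq [0,2t/\ell]$ we have $\SUM(Y_j) \le 2\gamma t/\ell$, so by \lemref{colorcoding} the call $\ColorCoding(Z_j, 2\gamma t/\ell, \gamma, \delta/\ell)$ puts $\SUM(Y_j)$ into $S_j$ with probability $\ge 1 - \delta/\ell$; union-bounding over the $\le \ell$ nonempty $Y_j$ costs another $\delta/2$ (roughly), and one also checks the empty $Y_j$ contribute $0 \in S_j$ trivially. Finally, whenever $\SUM(Y_j) \in S_j$ for all $j$, the binary-tree combination yields $\sum_j \SUM(Y_j) = \SUM(Y) \in S_1$: the intermediate caps $2^h \cdot 2\gamma t/\ell$ are large enough to never discard partial sums of $Y$, because a partial sum over $2^h$ consecutive blocks is a sum of at most $2^h\gamma$ numbers each $\le 2t/\ell$. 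Summing the two failure probabilities gives total error $\le \delta$.

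\medskip

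The main obstacle I anticipate is the probabilistic bucketing argument in step (iii): one has to be careful that the Chernoff bound gives a per-bin overflow probability small enough to survive a union bound over all $m$ bins \emph{and} leave room for the $\ColorCoding$ error. This is exactly why the threshold is set to $\gamma = 6\log(\ell/\delta)$ rather than something like $2\log(\ell/\delta)$ — the constant $6$ buys enough slack that $m \cdot \Pr[|Y_j| > \gamma] \le \delta/2$ and simultaneously the cap $2\gamma t/\ell$ is only a $\polylog$ factor above $t/\ell$. A secondary, more bookkeeping-style nuisance is tracking that all the intermediate sumset caps in the combination loop are genuinely large enough that no relevant partial sum of the fixed $Y$ is ever truncated; this needs the observation that after $h$ rounds a block covers $2^h$ of the original $Z_j$'s, hence at most $2^h \gamma$ chosen items of total value at most $2^h \gamma \cdot 2t/\ell = 2^h \cdot 2\gamma t/\ell$, matching the cap used.
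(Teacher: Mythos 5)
Your proposal follows essentially the same route as the paper: split into $m=\Theta(\ell/\log(\ell/\delta))$ random buckets, argue via Chernoff plus union bound that each $|Y_j|\le 6\log(\ell/\delta)$ with total failure probability $\le\delta/2$, invoke \lemref{colorcoding} on each bucket with error $\delta/\ell$, and combine in a binary tree with caps that grow geometrically so no relevant partial sum is truncated. One small arithmetic slip: each call $\ColorCoding(Z_j,2\gamma t/\ell,\gamma,\delta/\ell)$ costs $\Oh\big((\gamma t/\ell)\,\gamma^2\,\log t\,\log(\ell/\delta)\big)=\Oh\big((t/\ell)\log^4(\ell/\delta)\log t\big)$, not $\log^3$ as you simplified, and summing over $m$ calls gives $\Oh(t\log t\log^3(\ell/\delta))$ rather than $\log^2$ — this matches (rather than undershoots) the stated bound, so your conclusion stands but the intermediate simplification dropped a $\log(\ell/\delta)$ factor.
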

\begin{proof}
The case $\ell < \log(\ell/\delta)$ follows from \lemref{colorcoding}.
The inclusion $S \subseteq \SSB{Z}{t}$ also follows from \lemref{colorcoding} and since we only compute sumsets over partitionings, using the fact $\big(\SSB{Z_1}{t_1} \pl[t'] \SSB{Z_2}{t_2}\big) \cap \{0,\ldots,t\} \subseteq \SSB{Z}{t}$ for a partitioning $Z=Z_1 \cup Z_2$ and any $t,t',t_1,t_2 \ge 1$.

For the error probability, fix a subset $Y \subseteq Z$ with $\SUM(Y) \le t$, and let $Y_j := Y \cap Z_j$ for $1 \le j \le m$. A crucial property is that with sufficiently high probability the size $|Y_j|$ is smaller than $6 \log(\ell/\delta)$, thus allowing us to run $\ColorCoding$ with size bound $k = 6 \log(\ell/\delta)$.

\begin{claim}
  We have $\Pr[|Y_j| \ge 6 \log(\ell/\delta)] \le \delta/\ell$. 
\end{claim}
\begin{proof}
  Note that $|Y_j|$ is distributed as the sum of $|Y|$ independent Bernoulli random variables with success probability $1/m$. In particular, $\mu := \Ex[|Y_j|] = |Y| / m$. A standard Chernoff bound yields that $\Pr[|Y_j| \ge \lambda] \le 2^{-\lambda}$ for any $\lambda \ge 2 e \mu$.
  Recall that $(Z,t)$ is an $\ell$-layer instance, and thus $Y$ has size at most~$\ell$. This allows us to bound $\mu = |Y| / m \le \ell/m \le \log (\ell/\delta)$, by definition of $m$. The concentration inequality thus holds for $\lambda = 6 \log(\ell/\delta)$, and we obtain $\Pr[|Y_j| \ge 6 \log(\ell/\delta)] \le \delta/\ell$.  \IfSODA{\qed}
\end{proof}
By the above claim, we may assume that $|Y_j| \le 6 \log(\ell/\delta)$ holds for each $1 \le j \le m$; this happens with probability at least $1-m \cdot \delta/\ell$. 
Since $Z \subseteq [0,2t/\ell]$, any subset of $Z_j$ of size at most $6 \log(\ell/\delta)$ has sum at most $12 \log(\ell/\delta) t/\ell$. It follows that the call $\ColorCoding(Z_j, 12 \log(\ell/\delta) \cdot t/\ell, 6 \log(\ell/\delta), \delta/\ell)$ finds $\SUM(Y_j)$ with probability at least $1-\delta/\ell$. Assume that this event holds for each $1 \le j \le m$; this happens with probability at least $1-m \cdot \delta/\ell$. Then $S_j$ contains $\SUM(Y_j)$, and the tree-like sumset computation indeed yields a set containing $\SUM(Y_1) + \ldots + \SUM(Y_m) = \SUM(Y)$. The total error probability is $2 m \delta / \ell$. Since $\ell \ge 1$ and $\delta \le 1/4$ we have $\log(\ell/\delta) \ge 2$ and obtain $m \le \ell/2$. Hence, the total error probability is bounded by $\delta$. 

Regarding the running time, observe that by \lemref{colorcoding} each call to $\ColorCoding$ takes time $\Oh( t/\ell \cdot \log^4(\ell/\delta) \log t)$. Since there are $m = \Theta(\ell / \log(\ell/\delta))$ calls to $\ColorCoding$, we obtain the claimed time $\Oh(t \log^3(\ell/\delta) \log t)$. The remaining time for combining the sets $S_1,\ldots,S_m$ is 
$$\Oh\bigg(\sum_{h=1}^{\log m} \frac{m}{2^{h}} \cdot 2^h \log(\ell/\delta) t/\ell \cdot \log t\bigg) = \Oh(t \log t \log m), $$
which is dominated by the total time for calling $\ColorCoding$.
\IfSODA{\qed}
\end{proof}

\section{Polynomial Space Algorithm}
\label{sec:polynomial}

\subsection{Setup}
\label{sec:polynomialsetup}

We first give the setup of Lokshtanov and Nederlof~\cite{lokshtanov2010saving} and then present a version of their main result using modular arithmetic instead of complex numbers.

\paragraph{Circuits}
For a set $F$ and binary operators $O_1$, $O_2$ on $F$, a circuit $C$ over $(F; O_1, O_2)$ is a directed acyclic graph $D = (N, A)$ with parallel arcs, such that every node of $D$ is either a constant gate (indegree 0), $O_1$ gate (indegree 2), or $O_2$ gate (indegree 2). For an indegree 2 node, its two in-neighbours are its children. The output of a constant gate is the element it is labeled with. The output of an $O_i$ gate is the result of performing $O_i$ on the output of its two children. The output of $C$, denoted by $\outC$, is the output of a specific gate $c$ marked as the output gate. The size of $C$ is the size of the underlying graph (number of vertices plus number of edges). With some abuse of notation we will denote a gate of $C$ and the output of that gate by the same symbol. A \emph{probabilistic} circuit is a distribution $\C$ over circuits over $(F; O_1, O_2)$. 

\paragraph{Vector Operations}
Fix a ring $R$ with operations $+,\cdot$. For a vector $\va \in R^t$ we write $\va[i]$, $0 \le i < t$, for its $i$-th entry. We call $\va$ a \emph{singleton} if it has at most one non-zero entry.
We write $\boxplus$ for pointwise addition and $\boxdot$ for pointwise multiplication, i.e.,
for vectors $\va,\vb \in R^t$ we have $(\va \boxplus \vb)[i] = \va[i] + \vb[i]$ and $(\va \boxdot \vb)[i] = \va[i] \cdot \vb[i]$. We write $\boxtimes$ for convolution, i.e., we have $(\va \boxtimes \vb)[i] = \sum_{j=0}^i \va[j] \cdot \vb[i-j]$. 
The \emph{length} of a vector $\va$ is the maximal index of any non-zero entry of $\va$, i.e., $\len(\va) := \max\{0 \le i < t \mid \va[i] \ne 0\}$ or 0, if $\va$ is the all-zeroes-vector. 
We say that the convolution $\va \boxtimes \vb$ \emph{overflows} if $\len(\va) + \len(\vb) \ge t$. If $\va \boxtimes \vb$ does not overflow and $\va,\vb$ are the coefficients of polynomials $P(x),Q(x)$, respectively, then $\va \boxtimes \vb$ are the coefficients of the polynomial $P(x) \cdot Q(x)$. Throughout the paper we will always work with non-overflowing convolutions. 
 Finally, we also consider the sum of all entries of $\va$, i.e., $\summ(\va) := \sum_{i=0}^{t-1} \va[i]$.
  
  \begin{lem} \label{lem:lenmax}
    Let $\va_1,\ldots,\va_k \in R^{t}$. We have
    \begin{enumerate}[label=(\arabic*)]
      \item $\len(\va_1 \boxplus \ldots \boxplus \va_k) = \max_i \len(\va_i)$,
      \item $\summ(\va_1 \boxplus \ldots \boxplus \va_k) = \sum_i \summ(\va_i)$,
      \item if there is no overflow then $\len(\va_1 \boxtimes \ldots \boxtimes \va_k) = \sum_i \len(\va_i)$, and
      \item if there is no overflow then $\summ(\va_1 \boxtimes \ldots \boxtimes \va_k) = \prod_i \summ(\va_i)$.
    \end{enumerate}
  \end{lem}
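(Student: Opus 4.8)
The plan is to prove all four identities by induction on $k$, where the base case $k=1$ is trivial and the inductive step reduces to the case of two vectors $\va = \va_1 \boxplus \ldots \boxplus \va_{k-1}$ (resp.\ $\boxtimes$) and $\vb = \va_k$. So it suffices to establish each statement for $k=2$, together with the observation that the associativity of $\boxplus$ and $\boxtimes$ (the latter within the non-overflow regime, which is inherited: if $\len(\va_1 \boxtimes \ldots \boxtimes \va_k) < t$ then every prefix product also has length $< t$ by monotonicity of partial sums) lets us peel off one vector at a time. I would state this reduction explicitly at the start of the proof.

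For item (1) with two vectors: $(\va \boxplus \vb)[i] \ne 0$ can only happen if $\va[i] \ne 0$ or $\vb[i] \ne 0$, giving $\len(\va\boxplus\vb) \le \max(\len\va,\len\vb)$; but this inequality can in principle be strict over a general ring, since $\va[i] + \vb[i]$ could be zero even when both summands are nonzero --- this is the one genuine subtlety. The statement as written claims equality, so I would need to check the intended reading: the cleanest fix is to take $i = \max(\len\va,\len\vb)$; if, say, $\len\va > \len\vb$ then $\vb[i] = 0$ and $(\va\boxplus\vb)[i] = \va[i] \ne 0$, so equality holds; the problematic case is only $\len\va = \len\vb$ with cancellation at the top index. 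In the application, all vectors are characteristic-type vectors with nonnegative (indeed $\{0,1\}$ or $\N$-valued) entries in the ring, so no cancellation occurs; I would add the hypothesis that entries are nonnegative, or note that in our use $R$ carries such a structure. Item (2) for $k=2$ is immediate: $\summ(\va\boxplus\vb) = \sum_i(\va[i]+\vb[i]) = \summ(\va)+\summ(\vb)$ by commutativity and associativity of ring addition, and there is no cancellation issue since it is an identity of ring elements, not a nonvanishing claim.

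For item (3) with two non-overflowing vectors: let $p = \len(\va)$, $q = \len(\vb)$. Then $(\va\boxtimes\vb)[i] = \sum_{j=0}^i \va[j]\vb[i-j]$ vanishes for $i > p+q$ since every term has $j > p$ or $i-j > q$; and at $i = p+q$ the only surviving term is $j=p$, giving $\va[p]\vb[q] \ne 0$ --- again this last step needs that $R$ has no zero divisors, or (as in our setting) nonnegative entries, so I would invoke the same structural assumption on $R$ already used for (1). For item (4) with $k=2$: $\summ(\va\boxtimes\vb) = \sum_{i=0}^{t-1}\sum_{j=0}^i \va[j]\vb[i-j]$; since there is no overflow, all products $\va[j]\vb[\ell]$ with $j \le p$, $\ell \le q$ satisfy $j+\ell \le p+q < t$, so the double sum ranges over exactly all pairs $(j,\ell)$ with $0 \le j,\ell < t$ and we may reindex to $\big(\sum_j \va[j]\big)\big(\sum_\ell \vb[\ell]\big) = \summ(\va)\summ(\vb)$ by distributivity. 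The main obstacle is purely bookkeeping: making the non-overflow hypothesis do its job (ensuring we never truncate a nonzero convolution entry, so that the identities are exact rather than inequalities) and being careful about whether $R$ is assumed to have no zero divisors / nonnegative entries for the two ``$\ne 0$ at the top index'' claims in (1) and (3); I expect this is resolved by the standing conventions on $R$ in the paper, and I would flag it rather than belabor it.
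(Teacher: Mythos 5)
Your proof is correct and takes the same direct-verification route as the paper, which simply declares (1)--(3) immediate and gives a one-line counting argument for (4); your write-up just spells out the two-vector case and the induction more carefully. Your observation about the top-index nonvanishing claims in (1) and (3) is well-taken: over an arbitrary ring $R$ these could fail due to cancellation or zero divisors, but the paper only ever applies the lemma in \lemref{FSScircuit} over $\N^{t'}$ (the switch to $\Z_p$ happens afterwards, in \secref{reducingdomainsize}, and does not re-invoke the length bounds), so nonnegativity of entries holds throughout and the equalities are exact; flagging this implicit assumption is reasonable and does not affect correctness.
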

  \begin{proof}
    (1), (2), and (3) are immediate. For (4), note that any product $\va_1[j_1]\cdot \ldots \cdot \va_k[j_k]$, with $0 \le j_i \le \len(\va_i)$, contributes to exactly one entry of $\vb := \va_1 \boxtimes \ldots \boxtimes \va_k$, and each entry of $\vb$ can be written as a sum of such products.    
    \IfSODA{\qed}
  \end{proof}

\paragraph{Discrete Fourer Transform}
Fix a prime $p$ and $t \ge 1$. 
Let $\omega \in \Z_p$ be a primitive $t$-th root of unity, i.e., $\omega^t = 1$ and $\omega^{k} \ne 1$ for $1 \le k < t$.
The $t$-point Discrete Fourier Transform (DFT) over modular arithmetic in $\Z_p$ is the linear function $\F$ mapping a vector $\va \in \Z_p^t$ to $\F(\va) \in \Z_p^t$ with
$$ \F(\va)[i] = \sum_{j=0}^{t-1} \omega^{i j} \va[j]. $$
In other words, if $\va$ is the vector of coefficients of the polynomial $P(x) = \sum_{i=0}^{t-1} \va[i] x^i$ then $\F$ evaluates $P(x)$ at all powers of $\omega$, i.e., $\F(a) = (P(1),P(\omega),P(\omega^2),\ldots,P(\omega^{t-1}))$.
The inverse function $\F^{-1}$  is given by
$$ \F^{-1}(\va)[j] = \frac 1t \sum_{i=0}^{t-1} \omega^{-i j} \va[i]. $$

That $\F^{-1}$ is indeed the inverse operation of $\F$ is a consequence of the identity $\sum_{i=0}^{t-1} \omega^{-ij} \omega^{ik} = t \cdot [j=k]$, which follows from $\omega$ being a primitive $t$-th root of unity. Here and in the remainder we use Iverson's bracket notation, i.e., $[B]$ is 1 if $B$ is true, and 0 otherwise.
The most important property of DFT for our purposes is the Convolution Theorem (see, e.g.,~\cite{CLRSthird}), stating that for any $\va,\vb \in \Z_p^t$ such that $\va \boxtimes \vb$ does not overflow we have
$$ \F(\va \boxtimes \vb) = \F(\va) \boxdot \F(\vb).$$

\paragraph{Modular Variant of Lokshtanov-Nederlof}
Lokshtanov and Nederlof~\cite{lokshtanov2010saving} mention without proof that a version of their main result holds for modular arithmetic. For completeness, we prove such a result here. 

\begin{thm} \label{thm:lokned}
  Let $p$ be prime, $t \ge 1$, and suppose that $\Z_p^t$ contains a $t$-th root of unity $\omega$. Let $C$ be a circuit over $(\Z_p^t,\boxplus,\boxtimes)$ with only singleton constants. Suppose that no convolution gate overflows. Then given $p,t,\omega$, and $0 \le x < t$ we can compute $\outC[x]$ in time $\tOh(|C| t \log p)$ and space $\Oh(|C| \log p)$. 
\end{thm}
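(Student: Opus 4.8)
\textbf{Proof plan for \thmref{lokned}.}
The plan is to transform the circuit $C$ into the Fourier domain and then evaluate the target output entry by a single explicit summation. Concretely, given $C$ over $(\Z_p^t,\boxplus,\boxtimes)$ I would build a new circuit $C'$ over $(\Z_p^t,\boxplus,\boxdot)$ with the same underlying DAG, obtained by replacing every convolution gate $\boxtimes$ with the pointwise product gate $\boxdot$, replacing every singleton constant vector $\va$ by $\F(\va)$, and keeping $\boxplus$ gates unchanged. The key invariant, proved by induction on the topological order of the DAG, is that for every gate $g$ of $C$ the corresponding gate $g'$ of $C'$ satisfies $g' = \F(g)$. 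The base case is the constants: $\F$ is applied directly. The inductive step for $\boxplus$ uses linearity of $\F$, and the step for $\boxtimes$ uses the Convolution Theorem $\F(\va \boxtimes \vb) = \F(\va) \boxdot \F(\vb)$, which applies since by hypothesis no convolution gate overflows. Hence $\outC' = \F(\outC)$, and by the inverse DFT formula
\[
  \outC[x] = \F^{-1}(\outC')[x] = \frac 1t \sum_{i=0}^{t-1} \omega^{-ix}\, \outC'[i].
\]

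So it suffices to compute, for each fixed index $i \in \{0,\ldots,t-1\}$, the single entry $\outC'[i]$, multiply it by $t^{-1}\omega^{-ix}$ in $\Z_p$, and accumulate the sum. The crucial observation is that every gate of $C'$ is \emph{pointwise}: the $i$-th entry of a $\boxplus$ or $\boxdot$ gate depends only on the $i$-th entries of its children. Therefore, to compute $\outC'[i]$ we do not need to materialise any length-$t$ vector; we only need the scalar values $\{g'[i] : g' \text{ a gate of } C'\}$, computed in topological order. The one subtlety is the constants: the $i$-th entry of $\F(\va)$ for a singleton $\va$ with its unique nonzero entry $\va[j] = v$ is simply $v\,\omega^{ij}$, which we compute on the fly in $\Oh(\log p)$ time (using fast modular exponentiation for $\omega^{ij}$, or $\Oh(\log t)$ field operations, which is absorbed into the $\tOh$). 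Thus a single pass over the $\Oh(|C|)$ gates of $C'$ computes $\outC'[i]$ in $\Oh(|C|)$ field operations, i.e.\ $\tOh(|C|\log p)$ bit operations, while storing only $\Oh(|C|)$ scalars of $\Oh(\log p)$ bits each, giving space $\Oh(|C|\log p)$.

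Putting it together: loop over $i = 0,\ldots,t-1$; for each $i$ evaluate $\outC'[i]$ by the pointwise pass just described, reusing the same $\Oh(|C|\log p)$ space across iterations; accumulate $t^{-1}\omega^{-ix}\outC'[i]$ into a running sum in $\Z_p$. Total time $\tOh(|C|\,t\log p)$ and total space $\Oh(|C|\log p)$, as claimed. I would also note that $\omega^{-1}$ and $t^{-1}$ exist in $\Z_p$ (the former since $\omega$ has order $t$, hence is a unit; the latter since $t \mid p-1$ is forced by the existence of a primitive $t$-th root of unity, so $\gcd(t,p)=1$ — or, if $\omega$ is only assumed to be \emph{a} $t$-th root of unity rather than primitive, one should either strengthen the hypothesis to ``primitive'' or observe it is needed for $\F^{-1}$ to be the inverse, which is exactly how the DFT was set up above).

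The main obstacle is not any single hard step but rather getting the bookkeeping exactly right: verifying that the pointwise-evaluation pass genuinely needs only $\Oh(|C|)$ scalars (one per gate, discardable once all parents are consumed — though keeping all of them is already within the space budget), and being careful that "$|C|$" counts edges as well as nodes so that fan-out is accounted for. A secondary point worth stating cleanly is the handling of the constants under $\F$: one must confirm that the singleton hypothesis is what makes $\F(\va)[i]$ computable in $\tOh(\log p)$ time without precomputing or storing the whole transformed vector — for a general constant vector this entry would cost $\Oh(t)$ time, blowing up the bound. Everything else (induction via the Convolution Theorem, the inverse-DFT expansion) is routine given the preliminaries already established.
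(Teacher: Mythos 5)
Your proposal is correct and follows essentially the same route as the paper's proof: construct the Fourier-domain circuit $C'$ by replacing $\boxtimes$ with $\boxdot$ and constants $\va$ with $\F(\va)$, prove the invariant $g'=\F(g)$ by induction via the Convolution Theorem, then recover $\outC[x]$ by the inverse DFT sum while exploiting pointwise evaluation and the singleton hypothesis to keep each per-index pass to $\Oh(|C|)$ field operations. Your extra remark that $\omega$ should really be \emph{primitive} (so that $\F^{-1}$ is a genuine inverse and $t^{-1}$ exists) is a fair catch of loose wording in the theorem statement, but the preliminaries already fix $\omega$ to be primitive, so nothing substantive changes.
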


We discuss how to find an appropriate root of unity~$\omega$ later in \lemref{findingomega}.
We remark that Lokshtanov and Nederlof have the depth of $C$ as an additional factor in their time and space bounds, which seems to be due to their choice of the complex Fourier transform and does not appear in the modular version (this is a lower-order improvement).

\begin{proof}
The proof by Lokshtanov and Nederlof works almost verbatim.
From the circuit $C$ we construct another circuit $C'$ over $(\Z_p^t,\boxplus,\boxdot)$ with the same directed graph as $C$, but with different gates. Each constant gate $\va \in C$ is replaced with the constant gate $\va' = \F(\va)$. Each convolution gate $\boxtimes$ is replaced with a pointwise multiplication gate $\boxdot$. Then an easy inductive argument shows that for any gate $\va \in C$ and its corresponding gate $\va' \in C'$ we have $\va' = \F(\va)$. Indeed, if $\va$ is a constant gate then the claim holds by definition. For the inductive step, consider let $\vc{b},\vc{c}$ be the children of $\va$. By the inductive hypothesis we have $\vc{b}' = \F(\vc{b})$ and $\vc{c}' = \F(\vc{c})$. If $\va$ is an addition gate, then we conclude $\va' = \vc{b}' \boxplus \vc{c}' = \F(\vc{b}) \boxplus \F(\vc{c}) = \F(\vc{b} \boxplus \vc{c}) = \F(\va)$. If $\va$ is a convolution gate, then the Convolution Theorem implies $\va' = \vc{b}' \boxdot \vc{c}' = \F(\vc{b}) \boxdot \F(\vc{c}) = \F( \vc{b} \boxtimes \vc{c}) = \F(\va)$.

For ease of notation we write $\vf := \outC$.
We now use $C'$ to compute $\vf[x]$. Plugging the definition of $\F^{-1}$ into the identity $\vf = \F^{-1}(\F(\vf))$ yields
$$ \vf[x] = \frac 1t \sum_{i=0}^{t-1} \omega^{-i x} (\F(\vf))[i]. $$
Note that $\omega^{s}$, for $s \le \poly(t)$, can be computed with $\Oh(\log t)$ arithmetic operations in $\Z_p$ by repeated squaring. Thus, in order to compute $\vf[x]$ it suffices to compute $(\F(\vf))[0],\ldots,(\F(\vf))[t-1]$. 

For computing $(\F(\vf))[j]$, we convert circuit $C'$ to a circuit $C'_j$ over $(\Z_p,+,\cdot)$ with the same directed graph as $C'$. Each constant gate $\va' \in C'$ is replaced by $\va'[j]$. Each  pointwise addition gate $\boxplus$ is replaced with $+$ and each pointwise multiplication gate $\boxdot$ is replaced with $\cdot$. It follows immediately that $C'_j$ computes $(\F(\vf))[j]$. 

Consider a (singleton) constant gate $\va \in C$ with $\va[k] = v$ and $\va[\ell] = 0$ for all $\ell \ne k$. By definition of $\F$ we have $(\F(\va))[j] = \omega^{j k} v$. Thus, any constant gate in $C'_j$ can be computed with $\Oh(\log t)$ arithmetic operations in $\Z_p$. Computing the remaining gates of $C'_j$ takes $|C'| = |C|$ arithmetic operations in $\Z_p$. As arithmetic operations in $\Z_p$ can be performed in time $\tOh(\log p)$, we can compute any value $(\F(\va))[j]$ in time $\tOh(|C| \log p \log t)$ and space $\Oh(|C| \log p)$. Summing over all $j=0,\ldots,t-1$ yields time $\tOh(|C| t \log p \log t) = \tOh(|C| t \log p)$ and space $\Oh(|C| \log p)$.
\IfSODA{\qed}
\end{proof}

\subsection{FasterSubsetSum as a Circuit}
\label{sec:polynomialfsscircuit}

In the remainder of the paper, set the error probability $\delta := 1/n$.
We now describe how to convert the algorithm $\FasterSubsetSum$ into a probabilistic circuit over $(\N^{\tOh(t)};\boxplus,\boxtimes)$. This yields:

\begin{lem} \label{lem:FSScircuit}
  Given a \SubsetSum instance $(Z,t)$ with $|Z| = n$, we can construct a probabilistic circuit $\C$ over $(\N^{t'};\boxplus,\boxtimes)$, where $t' = \Oh(t \log^3 n)$ is a power of two, satisfying
  \begin{enumerate}[label=(\arabic*)]
    \item for any $0 \le s \le t$, $s \not\in \SSB{Z}{t}$ we have $\Pr_{C \in \C}[\outC[s] = 0] = 1$, and 
    \item for any $0 \le s \le t$, $s \in \SSB{Z}{t}$ we have $\Pr_{C \in \C}[\outC[s] = 0] \le 1/n$. 
  \end{enumerate}
  Any circuit $C \in \C$ has size $|C| = \tOh(n)$, uses only singleton constants, has no overflowing convolution gates, and any gate $\va \in C$ satisfies $\summ(\va) \le 2^{\tOh(n)}$. Sampling a circuit $C$ from $\C$ can be performed in time $\tOh(n)$. 
\end{lem}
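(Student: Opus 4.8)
The plan is to translate the recursive procedure $\FasterSubsetSum$ (together with its subroutines $\ColorCodingLayer$ and $\ColorCoding$) directly into a circuit over $(\N^{t'};\boxplus,\boxtimes)$, where each operation $\plt[\,]$ on characteristic vectors is realized by a convolution gate $\boxtimes$ (which by \proporef{sumsetcomputation} and the reduction to Boolean convolution exactly implements the capped sumset, except that we work over $\N$ and cap lengths rather than values) and each union $\cup$ of sets is realized by a pointwise addition gate $\boxplus$. First I would fix the vector length: since every intermediate quantity in $\FasterSubsetSum$ lives in $\{0,\ldots,t'\}$ for some $t' = \Oh(t \log^3 n)$ (the $\log^3 n$ factor comes from the padding/capping targets in $\ColorCodingLayer$, which rescale $t$ by factors up to $\Oh(\log(\ell/\delta))$ and combine up to $\log m$ times), I would take $t'$ to be the next power of two above this bound, which is needed so that a $t'$-th root of unity exists in the later application of \thmref{lokned}. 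Since $\delta = 1/n$, all the $\log(\ell/\delta)$ and $\log(1/\delta)$ factors become $\Oh(\log n)$, so the various targets are all $\tOh(t)$.

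Next I would build the circuit bottom-up. A single input integer $z \in Z$ becomes a singleton constant vector (the characteristic vector of $\{z\}$, with a $1$ in position $z$ and $0$ elsewhere; also include the constant for $\{0\}$, or equivalently absorb the ``$+\{0\}$'' into the convolution semantics — I would handle this by using singleton constants for $\{0,z\}$ is not a singleton, so instead I note that $A \pl B$ with the $\{0\}$-padding is exactly ordinary convolution of the characteristic vectors of $A\cup\{0\}$ and $B\cup\{0\}$; to keep constants singleton I would instead convolve the characteristic vector of $\{z\}$ against vectors already containing a leading $1$, building up inductively so that every vector ever produced has a $1$ in position $0$, which is preserved by $\boxtimes$ and by $\boxplus$, and start the recursion from constant gates that are genuine singletons $\{z\}$ — the base sumsets are then $\{0\}\cup\{z\} = \{0,z\}$ obtained as $e_0 \boxplus e_z$ of two singletons). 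Then: $\ColorCoding(Z_j,\cdot,\gamma,\cdot)$ with $\gamma = \Oh(\log n)$ is a union of $\Oh(\log n)$ sumsets $Z_1 \plt \ldots \plt Z_{\gamma^2}$, each a chain of $\gamma^2 = \Oh(\log^2 n)$ convolution gates, so $\ColorCoding$ on a set of size $n_j$ uses $\Oh(n_j + \log^3 n)$ gates; $\ColorCodingLayer$ calls $\ColorCoding$ on $m$ parts and combines with a binary tree of $\Oh(m)$ convolution gates, giving $\Oh(n_i \polylog n)$ gates per layer where $n_i = |Z_i|$; summing over the $\Oh(\log n)$ layers and the final combination in $\FasterSubsetSum$ gives total size $\tOh(n)$. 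The randomness (which colour each $z$ gets, how many repetitions) determines the shape of $C$, so $\C$ is exactly the induced distribution, and sampling $C$ amounts to running the random-partitioning steps of $\FasterSubsetSum$, which takes $\tOh(n)$ time.

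For the correctness properties (1) and (2): property (1) is the ``soundness'' direction — every entry of $\outC$ that is nonzero corresponds to an actually-achievable subset sum. This follows because, by the inclusion arguments already proved in \lemref{colorcoding} and \lemref{colorcodinglayer}, every sumset and union produced only ever contains elements of $\SSB{Z}{t}$ (capped appropriately); I would phrase this as: the support of every vector $\va$ in $C$ is contained in the set of subset sums of the corresponding sub-multiset, proved by the same induction, using that $\boxtimes$ of characteristic-support vectors has support inside the sumset of the supports and $\boxplus$ has support the union. Hence $\outC[s] \ne 0$ implies $s \in \SSB{Z}{t}$, giving (1). Property (2) is precisely the success-probability guarantee of \thmref{main}/\lemref{colorcodinglayer}: for $s = \SUM(Y) \in \SSB{Z}{t}$, with probability $\ge 1 - \delta = 1 - 1/n$ the random partitions split the relevant subsets so that $s$ lands in the support of $\outC$, i.e. $\outC[s] \ne 0$ — the only subtlety is that entries are sums of products of nonnegative integers, hence nonnegative, so a sum reaching position $s$ genuinely forces $\outC[s] \ge 1$ (no cancellation); this is where working over $\N$ rather than, say, $\Z$ matters. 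Finally, the constants are singletons by construction, no convolution overflows because all lengths stay within $t' \ge$ (max subset sum we track) by the capping built into $\ColorCodingLayer$'s targets and the choice of $t'$, and the bound $\summ(\va) \le 2^{\tOh(n)}$ follows from \lemref{lenmax}(2),(4): $\summ$ multiplies across a convolution and adds across a sum, and there are $\tOh(n)$ gates each combining values that start at $1$, so $\summ$ is at most $2^{\tOh(n)}$.

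The main obstacle I expect is the careful bookkeeping of \textbf{vector lengths and the ``$+\{0\}$'' padding within the singleton-constant constraint}: \thmref{lokned} demands that every constant gate be a singleton, yet the natural encoding of a sumset operand is $A \cup \{0\}$, which is not a singleton. Resolving this cleanly — by maintaining the invariant that every non-constant vector already contains $1$ in position $0$ and only ever feeding genuine singletons $e_z$ into convolutions — requires rechecking that $\ColorCoding$'s chain $Z_1 \plt \cdots \plt Z_{k^2}$ can be rewritten this way (each $Z_i$ may have several elements, so $Z_i \cup \{0\}$ is itself a union/convolution of singletons, which just adds more gates, still $\tOh(n)$ total), and rechecking that the capping in $\ColorCodingLayer$ (the final $\cap\{0,\ldots,t\}$ and the intermediate target bounds) is compatible with a fixed global length $t'$ and with ``no overflow.'' A secondary obstacle is verifying the $\log^3 n$ blow-up in $t'$ precisely: the targets passed down are $2\gamma t/\ell$ with $\gamma = \Oh(\log n)$, and the binary-tree combination in $\ColorCodingLayer$ multiplies the effective range by $2^h \le m = \tOh(\ell)$, which could threaten to bring the length back up to $\tOh(t)\cdot$ something — I would check that after the $\cap\{0,\ldots,t\}$ steps (which I realize as explicit truncations, harmless since the circuit only needs to be correct on positions $\le t$) everything stays within $\Oh(t \log^3 n)$.
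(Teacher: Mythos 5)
Your proposal follows essentially the same route as the paper's proof: unroll $\FasterSubsetSum$/$\ColorCodingLayer$/$\ColorCoding$ into a circuit with $\boxplus$ for set unions and $\boxtimes$ for (uncapped) sumsets, build leaves from singleton constants $\ve_z$, pick a power-of-two length $t'=\Oh(t\log^3 n)$ and verify no overflow via $\len$-arithmetic, argue soundness and the $1-1/n$ hitting probability by transfer from \lemrefs{colorcoding}{colorcodinglayer}, and bound $\summ(\va)$ multiplicatively via \lemref{lenmax}. Your explicit treatment of the $\{0\}$-padding under the singleton-constant constraint (inserting $\ve_0\boxplus\ve_z$ at the leaves and propagating the ``$1$ at position $0$'' invariant) addresses a detail the paper's writeup glosses over, so that extra care is warranted rather than a deviation.
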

\begin{proof}
  We carefully inspect the algorithm $\FasterSubsetSum$, transforming any $\cup$-operation into a $\boxplus$-gate and any $\plt$-operation into a $\boxtimes$-gate. The details are as follows.

  Set $t' := 300 \log^3(2n^2) t$, rounded up to a power of two.
  For any vector $\va \in \N^{t'}$ we say that $\va$ \emph{represents} the set $\{ 0 \le i < t' \mid \va[i] > 0\}$. Note that if $\va,\vb$ represent sets $A,B$ then $\va \boxplus \vb$ represents $A \cup B$. Moreover, if $\va \boxtimes \vb$ does not overflow then $\va \boxtimes \vb$ represents $A \pl[t'] B$, which in this case equals $A \pl B$. For any $0 \le i < t'$ we denote by $\ve_i$ the singleton vector containing a 1 at position $i$ and a 0 at all other entries. Note that $\ve_i$ represents~$\{i\}$. 
  
  We now use this representation to convert algorithm $\FasterSubsetSum$ to a probabilistic circuit over $(\N^{t'};\boxplus,\boxtimes)$. Fix the randomness of $\FasterSubsetSum$, i.e., fix all random partitions performed in $\ColorCoding$ and $\ColorCodingLayer$. All remaining basic operations in $\FasterSubsetSum$, $\ColorCoding$, and $\ColorCodingLayer$ are (1) forming a subset $Z' \subseteq Z$, (2) taking the union of two sets, and (3) computing the sumset of two sets.
  Hence, we can view $\FasterSubsetSum$ as a circuit where each node computes a set. Each leaf is directly assigned a subset $Z' \subseteq Z$. Each inner node either computes the union $\cup$ or the sumset $\pl[t'']$, for some~$t''$, of its children. 
  
  Now we replace each leaf, corresponding to $Z' \subseteq Z$, by the circuit computing $\bigboxplus_{z \in Z'} \ve_z$, which is a vector representing $Z'$. Note that this circuit has size $|Z'|$ and uses only singleton constants. Moreover, we replace each $\cup$-gate by $\boxplus$ and each $\pl[t'']$-gate by $\boxtimes$. This yields a circuit $C$ over $(\N^{t'};\boxplus,\boxtimes)$. Over the randomness of $\FasterSubsetSum$ used for the random partitionings, we obtain a probabilistic circuit $\C$.
  
  We show that $t'$ is chosen sufficiently large so that circuit $C$ has no overflowing convolution gates. To this end, we use \lemref{lenmax} to bound the length $\len(\va) = \max\{0 \le i < t' \mid \va[i] \ne 0\}$ of any gate $\va$.
  For the output of $\ColorCoding(Z,t,k,\delta)$, as well as any of its inner gates, we can bound the length by $k^2 \cdot \max Z$, since we compute a union (which does not increase the length) over a $k^2$-wise sumset computation (multiplying the length by $k^2$) over subsets of $Z$ (having length at most $\max Z$). For $\ColorCodingLayer(Z,t,\ell,\delta)$ we can bound the length by $\ell \cdot 36 \log^2(\ell/\delta) \max Z$, since we compute an $m$-wise sumset, for some $m \le \ell$, over the output of $\ColorCoding$ with $k = 6 \log(\ell/\delta)$. Finally, for $\FasterSubsetSum(Z,t,\delta)$ we perform a sumset computation over layers $i=1,\ldots, \lceil \log n \rceil$. For the $i$-th layer we run $\ColorCodingLayer$ on $\ell = 2^i$ and $Z_i$ with $\max Z_i \le t/2^{i-1}$, which by the above bound yields length at most $2^i \cdot 36 \log^2(2^i/\delta) t/2^{i-1} \le 72 \log^2(2n/\delta) t$. Summing over all $i = 1,\ldots,\lceil \log n \rceil$ yields the desired bound on the length of $\lceil \log n \rceil \cdot 72 \log^2(2n/\delta) t \le 144 \log^3(2n/\delta) t < t'/2$, since we set $\delta = 1/n$. 
  This bound holds at all gates of any circuit $C \in \C$. In particular, this shows that no convolution gate overflows, since $t'$ is chosen sufficiently large.
  
  With this property it is immediate that $C$ computes a vector $\outC$ representing a set $S'$ such that for $S := S' \cap \{0,\ldots,t\}$ we have $\FasterSubsetSum(Z,t,1/n) \subseteq S \subseteq \SSB{Z}{t}$. Indeed, by the same reasoning as for $\FasterSubsetSum$ we argue that the resulting circuit only performs set partitionings and sumset computations and thus only computes valid subset sums, proving $S' \subseteq \SSB{Z}{t'}$ and thus $S \subseteq \SSB{Z}{t}$. For the other direction, note that we drop the index from a $\pl[t'']$-gate when converting it to a $\boxtimes$-gate. Thus, the circuit $C$ does not compute the same set $S$ as $\FasterSubsetSum(Z,t,1/n)$, but $S$ may be a proper superset. However, this change cannot remove elements, so we obtain the inclusion $\FasterSubsetSum(Z,t,1/n) \subseteq S$. 
  
  From this, correctness follows immediately. Indeed, for $s \in \{0,\ldots,t\}$, $s \not \in \SSB{Z}{t}$ we obtain $s \not\in S$ and thus $\outC[s] = 0$. Also, for $s \in \SSB{Z}{t}$ we have $s \in \FasterSubsetSum(Z,t,1/n)$ with probability at least $1-1/n$ and thus $s \in S$ with probability at least $1-1/n$.
  
  Since we almost always simply partition the current set $Z$, except for $\ColorCoding$ where we take a union over $\Oh(\log(n))$ partitionings, each item in $Z$ appears in $\Oh(\log(n))$ leafs of the circuit. It is also easy to see that the depth of the circuit is $\Oh(\log(n))$. This allows us to bound $|C| \le \tOh(n)$, and it is easy to see that $C \in \C$ can be sampled in the same time bound.
  
  Finally, we bound the sum of entries $\summ(\va) = \sum_{i=0}^{t'-1} \va[i]$ for any gate $\va \in C$ with $C \in \C$, using \lemref{lenmax}.(4). For a subset $Z' \subset Z$ we implement a circuit representing $Z'$ with sum of entries equal to $|Z'| \le n$. For $\ColorCoding(Z,t,k,\delta)$ we have sum of entries $\Oh(\log(1/\delta) n^{k^2})$, since we take the union over $\Oh(\log(1/\delta))$ rounds (multiplying the sum of entries by $\Oh(\log(1/\delta))$) over $k^2$-wise sumset computations (raising the sum of entries to the $k^2$\nobreakdash-th power) over subsets of $Z$, which have sum of entries at most~$n$.  For $\ColorCodingLayer(Z,t,\ell,\delta)$ we can bound the sum of entries by $n^{\Oh(\log^2(\ell/\delta) \cdot \ell)}$, since we compute an $m$-wise sumset, for some $m \le \ell$, over the output of $\ColorCoding$ with $k = \Oh(\log(\ell/\delta))$. Finally, for $\FasterSubsetSum(Z,t,1/n)$ the sum of entries is bounded by $n^{\Oh(\log n \cdot \log^2(n) \cdot n)} = 2^{\Oh(\log^4(n) \cdot n)} = 2^{\tOh(n)}$, since we perform a sumset computation over $\Oh(\log n)$ layers (which raises the sum of entries to the $\Oh(\log n)$-th power) over the result of $\ColorCodingLayer$ on $\ell \le 2n$. 
  \IfSODA{\qed}
\end{proof}

Note that the bound $\summ(\va) \le 2^{\tOh(n)}$ also bounds the maximal entry of any gate $\va$ in any circuit $C \in \C$. Thus, we may replace $\N$ by $\Z_p$ for some $p \le 2^{\tOh(n)}$ (with sufficiently many hidden logarithmic factors) and still obtain the same result. This yields a probabilistic circuit $\C'$ over $(\Z_p^{t'},\boxplus,\boxtimes)$. Running \thmref{lokned} on $\C'$ would yield a randomized \SubsetSum algorithm with time $\tOh(n t \log p) = \tOh(n^2 t)$ and space $\tOh(n \log p) = \tOh(n^2)$, if we were able to efficiently compute the required root of unity (which is a non-trivial problem). 

In the next section, we further reduce time and space and we provide an algorithm for computing an appropriate root of unity.

\subsection{Reducing the Domain Size}
\label{sec:reducingdomainsize}

In the last section, we designed a probabilistic circuit $\C$ over $(\N^{t'};\boxplus,\boxtimes)$ and then replaced $\N$ by $\Z_p$ for sufficiently large $p = 2^{\tOh(n)}$.
Now we will pick $p$ as a random prime, chosen uniformly from a set $P$ of size $\tOmega(n^{2})$. This randomness is independent from the randomness of picking a circuit $C \in \C$. Since any number $k \le 2^{\tOh(n)}$ has at most $\tOh(n)$ prime factors,  for any $C \in \C$ the probability that $p$ divides $\outC$ is at most $\tOh(1/n)$. Interpreting a circuit~$C \in \C$ as a circuit over $(\Z_p^{t'};\boxplus,\boxtimes)$, i.e., performing all arithmetic operations of $C$ modulo~$p$, yields a circuit $C_p$ over $(\Z_p^{t'};\boxplus,\boxtimes)$ that computes the value $\outC \bmod p$. Taking into account the randomness of $C \in \C$, we obtain a probabilistic circuit $\C_p$ over $(\Z_p^{t'};\boxplus,\boxtimes)$ with the following properties (see \lemref{FSScircuit}):
\begin{enumerate}[label=(\arabic*)]
  \item for any $0 \le s \le t$, $s \not\in \SSB{Z}{t}$ we have $\Pr_{p \in P}[ \Pr_{C_p \in \C_p}[ \out{C_p}[s] = 0]] = 1$, and
  \item for any $0 \le s \le t$, $s \in \SSB{Z}{t}$ we have $\Pr_{p \in P}[ \Pr_{C_p \in \C_p}[ \out{C_p}[s] = 0]] \le \tOh(1/n)$.
\end{enumerate}
Note that \thmref{lokned} and \lemref{FSScircuit} now yield total time $\tOh(|C| t \log(\max P)) = \tOh(n t \log(\max P))$ and space $\tOh(|C| \log(\max P)) = \tOh(n \log(\max P))$ for solving \SubsetSum. 

It remains to choose an appropriate set of primes~$P$.  From the above discussion we have the following requirements, where we need (1) for running DFT, see \thmref{lokned}, (2)  to avoid $\tOh(n)$ prime factors, and (3) since $\max P$ appears in the running time.

\begin{lem} \label{lem:reductionblar}
  For any power of two $\tau$, and $n \le \tau$, there is a set $P$ of primes such that 
  \begin{enumerate}
  \item[(1)] $\Z_p$ contains a $\tau$-th root of unity $\omega_p$ for any $p \in P$, 
  \item[(2)] $|P| = \tOmega(n^{2})$, and 
  \item[(3)] assuming ERH we have $\max P = \tau^{\Oh(1)}$, while unconditionally we have $\max P \le \Oh(\exp(\tau^\eps))$ for any $\eps > 0$ (with the $\polylog(n)$ in $|P| = \tOmega(n^{2})$ depending on $\eps$).
  \end{enumerate} 
  We can pick a random $p \in P$ and compute~$\omega_p$ in randomized time $(\log(\max P))^{\Oh(1)}$, with error probability $1/\poly(\tau)$.
\end{lem}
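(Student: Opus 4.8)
The plan is to prove \lemref{reductionblar} by choosing $P$ to be a large enough collection of primes lying in the arithmetic progression $1 + \tau\,\N$; this progression is the natural one because a prime $p \equiv 1 \pmod{\tau}$ has $\tau \mid p-1$, and then the cyclic group $\Z_p^\ast$ of order $p-1$ contains an element of order exactly $\tau$, which is precisely a primitive $\tau$-th root of unity (this gives property (1)). Concretely, I would set a threshold $x$ and let $P$ be the set of all primes $p \le x$ with $p \equiv 1 \pmod \tau$, discarding any $p \le \tau$ (which cannot occur anyway) and any $p$ dividing the relevant numbers if one wants to be careful, though that is not needed here since the divisibility argument is handled in \secref{reducingdomainsize}. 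The size bound (2) and the threshold bound (3) both come from a quantitative form of Dirichlet's theorem on primes in arithmetic progressions: we need $\pi(x;\tau,1) = |\{p \le x : p \equiv 1 \pmod\tau\}| = \tOmega(n^2)$ while keeping $x$ as small as possible.

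The key steps, in order, are as follows. First, record the group-theoretic fact that $p \equiv 1 \pmod \tau$ implies $\Z_p$ has a primitive $\tau$-th root of unity, so property (1) holds for every $p \in P$ by construction. Second, invoke the conditional bound: under ERH, the Chebotarev/Dirichlet estimate (e.g.\ the effective form $\pi(x;q,a) = \frac{\mathrm{li}(x)}{\varphi(q)} + O(\sqrt{x}\log(qx))$ valid for $\gcd(a,q)=1$) shows that for $x = \tau^{c}$ with a suitable constant $c$ and $\tau \ge n$, we get $\pi(x;\tau,1) = \Omega(x / (\varphi(\tau)\log x)) = \tau^{\Omega(1)} \ge n^{\Omega(1)}$; boosting the exponent $c$ makes this $\tOmega(n^2)$, giving (2) and the ERH half of (3). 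Third, for the unconditional half, use Linnik's theorem (or rather a density version): for any $\eps>0$ there is a threshold of the form $x = \exp(\tau^{\eps})$ below which the progression $1+\tau\N$ contains at least $\tOmega(n^2)$ primes — this follows from the unconditional lower bounds on primes in progressions once $x$ is allowed to be superpolynomial in $\tau$ (even a crude sieve/Brun–Titchmarsh-plus-lower-bound argument suffices when $x$ is this large relative to the modulus), and the $\polylog$ factor in $|P|$ is allowed to depend on $\eps$. Fourth, describe the algorithmic part: to sample $p \in P$, repeatedly pick a random integer of the form $1 + \tau\cdot j$ with $j \le x/\tau$, test primality (by a randomized primality test such as Miller–Rabin, or Agrawal–Kayal–Saxena deterministically) in time $\polylog(x)=(\log\max P)^{\Oh(1)}$, and retry; the density bound guarantees that $\poly(\log x)$ retries suffice with failure probability $1/\poly(\tau)$. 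Then to compute $\omega_p$: pick a random $g \in \Z_p^\ast$, set $\omega_p := g^{(p-1)/\tau} \bmod p$, and check that $\omega_p^{\tau/q} \ne 1$ for each prime $q \mid \tau$ (since $\tau$ is a power of two we only need to check $q=2$, i.e.\ $\omega_p^{\tau/2}\ne 1$); a random $g$ is a generator — or at least has the needed order after exponentiation — with constant probability, so again $\Oh(\log(1/\text{error}))$ tries give error $1/\poly(\tau)$, all within time $(\log\max P)^{\Oh(1)}$.

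I expect the main obstacle to be the unconditional part of (3): getting a clean, citable statement that guarantees $\tOmega(n^2)$ primes $\equiv 1 \pmod\tau$ below $\exp(\tau^\eps)$. Linnik's theorem in its standard form only gives \emph{one} prime below $\tau^{L}$ for an absolute constant $L$, which is polynomial in $\tau$ and hence would already suffice for a single prime but not obviously for $\tOmega(n^2)$ of them; one needs a lower-bound sieve or a density version of Linnik (or simply the observation that once $x = \exp(\tau^\eps)$ the modulus $\tau = (\log x)^{1/\eps}$ is tiny compared to $x$, so standard results like Siegel–Walfisz or even elementary lower bounds on $\pi(x;\tau,1)$ in the range $\tau \le (\log x)^{O(1)}$ immediately give $\Omega(x/(\varphi(\tau)\log x))$ primes, which is astronomically more than $n^2$). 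I would phrase this carefully, noting that the $\polylog(n)$ hidden in $|P| = \tOmega(n^2)$ is permitted to depend on $\eps$, which absorbs any slack. A secondary, minor obstacle is making the sampling step's error probability genuinely $1/\poly(\tau)$ rather than constant: this is handled by running the randomized primality test and the generator search with enough independent repetitions, which costs only a further $\polylog$ factor and stays within the claimed $(\log\max P)^{\Oh(1)}$ bound. Everything else — property (1), the ERH estimate, and the exponentiation to extract $\omega_p$ — is routine.
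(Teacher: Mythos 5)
Your proposal is correct and follows essentially the same route as the paper: reduce condition (1) to $P \subseteq 1 + \tau\cdot\N$, invoke Siegel--Walfisz (after noting $\tau = (\log x)^{1/\eps}$ is tiny relative to $x = \exp(\tau^\eps)$) for the unconditional bound and an ERH-effective prime-counting estimate for the conditional one, sample $p$ by random trial plus primality testing, and extract $\omega_p$ by raising a random element to the $(p-1)/\tau$ power and checking $\omega_p^{\tau/2}\neq 1$. Your side observation that the success probability of the root-extraction step is a true constant ($\phi(\tau)/\tau = 1/2$ since $\tau$ is a power of two), rather than merely $\Omega(1/\log\log p)$ as the paper argues via counting generators, is a small sharpening but does not change the claimed bounds.
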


Recall from \thmref{lokned} that the time and space bounds incur a factor $\Oh(\log p)$. Assuming ERH we have $\Oh(\log p) = \Oh(\log \tau) = \Oh(\log t)$, which yields the desired time $\tOh(n t)$ and space $\tOh(n \log t)$, proving the first statement of \thmref{polynomial}. The unconditional statement follows similarly.
%
%

\begin{proof}
We first simplify requirement (1). For any prime~$p$, by Fermat's little theorem any $a \in \Z_p$, $a \ne 0$, is a $(p-1)$-th root of unity. The additional property $a^{k}\ne 1$ for all $1 \le k < p-1$ of a \emph{primitive} $(p-1)$-th root of unity is satisfied for any generator $g$ of the multiplicative group $\Z_p^* = \Z_p \setminus \{0\}$. Finally, if $\tau$ divides $p-1$ and $g$ is a primitive $(p-1)$-th root of unity then $g^{(p-1)/\tau}$ is a primitive $\tau$-th root of unity. Note that if $\tau$ divides $p-1$ then $p$ is in the arithmetic progression $\{1 + k \cdot \tau \mid k \in \N\}$, which we write as $1+ \tau \cdot \N$.
Hence, we many replace requirement (1) by
\begin{enumerate}[label=(\arabic*')]
  \item $P \subseteq 1 + \tau \cdot \N$.
\end{enumerate}

The above argument shows existence of a primitive $\tau$-th root of unity, but it does not yield an efficient algorithm for determining one, since there is no efficient algorithm known for finding a generator of $\Z_p^*$ (the known algorithms have to compute a prime factorization of $p-1$). We circumvent this problem by using that $\tau$ is a power of two. In this situation, we can compute a $\tau$-th root of unity as follows.

\begin{lem} \label{lem:findingomega}
  Let $p$ be prime and $\tau$ be a power of two, where $\tau$ divides $p-1$. Pick $a \in \Z_p^*$ uniformly at random and set $\omega := a^{(p-1)/\tau}$. If $\omega^{\tau} \ne 1$ or $\omega^{\tau/2} = 1$ then restart this procedure. This method finds a primitive $\tau$\nobreakdash-th root of unity $\omega \in \Z_p$ in expected time $\Oh(\polylog(p))$.
\end{lem}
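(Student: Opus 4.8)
The plan is to verify that the procedure in \lemref{findingomega} is correct, that each iteration succeeds with constant probability, and that each iteration costs only $\Oh(\polylog(p))$ arithmetic operations on $\Oh(\log p)$-bit numbers.

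\textbf{Correctness of a successful iteration.} First I would observe that for any $a \in \Z_p^*$, Fermat's little theorem gives $a^{p-1} = 1$, so $\omega := a^{(p-1)/\tau}$ satisfies $\omega^{\tau} = a^{p-1} = 1$; that is, $\omega$ is always a $\tau$-th root of unity, and the test $\omega^{\tau} \ne 1$ can actually never trigger a restart (I would note this as a harmless redundancy, or phrase it as a safety check). The substantive condition is $\omega^{\tau/2} \ne 1$. I claim this is equivalent to $\omega$ being a \emph{primitive} $\tau$-th root of unity. Indeed, the order of $\omega$ divides $\tau$, and since $\tau$ is a power of two, the order of $\omega$ is also a power of two; it equals $\tau$ iff it does not divide $\tau/2$, i.e.\ iff $\omega^{\tau/2} \ne 1$. (If the order $d$ were a proper divisor of $\tau$, then $d \mid \tau/2$ because $\tau/2$ is the largest power-of-two proper divisor, so $\omega^{\tau/2}=1$; conversely if $\omega^{\tau/2}=1$ the order divides $\tau/2<\tau$.) Hence every iteration that does not restart outputs a primitive $\tau$-th root of unity, as required.

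\textbf{Success probability per iteration.} Next I would bound the probability that a random $a$ yields $\omega^{\tau/2} = 1$. We have $\omega^{\tau/2} = a^{(p-1)/2}$, and $a^{(p-1)/2} = 1$ in $\Z_p^*$ precisely when $a$ is a quadratic residue, which happens for exactly $(p-1)/2$ of the $p-1$ elements of $\Z_p^*$. Therefore a uniformly random $a$ gives $\omega^{\tau/2}=1$ with probability exactly $1/2$, so each iteration succeeds with probability $1/2$, and the expected number of iterations is $2$. (Combined with the observation above that the $\omega^\tau\ne1$ test never fires, the only restart cause is the quadratic-residue event.)

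\textbf{Cost per iteration.} Picking $a \in \Z_p^*$ uniformly at random takes $\Oh(\polylog p)$ time (sample in $\{1,\dots,p-1\}$; any value is in $\Z_p^*$ since $p$ is prime). Computing $\omega = a^{(p-1)/\tau}$ and the test powers $\omega^{\tau}$, $\omega^{\tau/2}$ by repeated squaring uses $\Oh(\log p)$ multiplications in $\Z_p$, each costing $\tOh(\log p)$ time, for a total of $\tOh(\log p) = \Oh(\polylog p)$ per iteration. Multiplying by the expected number $\Oh(1)$ of iterations gives expected time $\Oh(\polylog p)$. I do not expect any real obstacle here; the only point requiring a little care is making the ``order is a power of two'' argument precise (the equivalence between $\omega^{\tau/2}\ne1$ and primitivity), which is the crux of why restricting $\tau$ to a power of two lets us sidestep the need to factor $p-1$ or find a generator of $\Z_p^*$.
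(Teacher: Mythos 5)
Your proof is correct, and for the success-probability step it takes a genuinely different and in fact sharper route than the paper. The paper bounds the probability that a uniformly random $a\in\Z_p^*$ is a generator of $\Z_p^*$, using that there are $\phi(p-1)$ generators and the lower bound $\phi(m)=\Omega(m/\log\log m)$; this gives a per-iteration success probability of $\Omega(1/\log\log p)$ and hence $\Oh(\log\log p)$ expected iterations. You instead observe that $\omega^{\tau/2}=a^{(p-1)/2}$ and invoke Euler's criterion: this quantity equals $1$ iff $a$ is a quadratic residue, which happens for exactly half of $\Z_p^*$. This yields a per-iteration success probability of exactly $1/2$, needs no asymptotic estimate on the totient function, and simultaneously shows that the test $\omega^{\tau/2}\ne 1$ is not merely sufficient for primitivity but equivalent to it (so the algorithm never discards a primitive root). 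Your observation that the check $\omega^\tau\ne 1$ can never trigger (by Fermat) is also a nice clarification the paper does not make explicit. The correctness argument (order of $\omega$ divides the power of two $\tau$, hence is itself a power of two, hence is $<\tau$ iff it divides $\tau/2$) is essentially the same as the paper's contrapositive argument. Both proofs reach the claimed $\Oh(\polylog p)$ bound; yours does so with a cleaner constant.
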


Note that by halting after $\Oh(\polylog(p))$ iterations we may obtain an algorithm with \emph{worst-case} running time $\Oh(\polylog(p))$ and error probability $1/\poly(p) \le 1/\poly(t)$.

\begin{proof}
  First observe that if the method finishes then $\omega$ is a primitive $\tau$-th root of unity. Indeed, $\omega$ is a $\tau$-th root of unity since $\omega^\tau = 1$. For showing that $\omega$ is primitive, consider the minimal $\ell \ge 1$ with $\omega^\ell = 1$. From $\omega^\tau = 1$ it follows that $\ell$ divides $\tau$, and thus $\ell$ is a power of two. In particular, if $\ell < \tau$ then $\omega^{\tau/2} = \omega^{\ell \cdot \tau/(2\ell)} = 1^{\tau/(2\ell)} = 1$, contradicting the break condition $\omega^{\tau/2} \ne 1$. Hence, $\ell = \tau$, and for any $1 \le k < \tau$ we have $\omega^k \ne 1$, showing that $\omega$ is primitive.
  
  Since arithmetic operations in $\Z_p$ can be performed in time $\Oh(\polylog(p))$, it remains to bound the success probability of one round of the method by $\Omega(1/\polylog(p))$. To this end, it suffices to show that a random $a \in \Z_p^*$ is likely to be a generator of $\Z_p^*$, since then $a$ is a primitive $(p-1)$-th root of unity and $\omega = a^{(p-1)/\tau}$ is a primtive $\tau$-th root of unity. For bounding the probability of $a$ being a generator, we combine the well-known fact that the number of generators is $\phi(p-1)$, where $\phi$ is Euler's totient function, and the asymptotic lower bound $\phi(n) = \Omega(n / \log \log n)$. 
  \IfSODA{\qed}
\end{proof}
It remains to find a set $P \subseteq 1 + \tau \cdot \N$ of $\tOmega(n^2)$ primes. Dirichlet's theorem shows that the arithmetic progression $1 + \tau \cdot \N$ contains infinitely many primes. However, in order to bound $\max P$ we need a quantitative version of Dirichlet's theorem.
In particular, let $\pi(x,\tau)$ be the number of primes in $(1 + \tau \cdot \N) \cap [0,x]$. We need an upper bound on the minimal $x$ such that $\pi(x,\tau) = \tOmega(n^2)$.

The best unconditional bound is given by (a well-known corollary of) the Siegel-Walfisz theorem~\cite{Walfisz1936}: For any constant $C > 0$ there is a constant $C' > 0$ such that if $\tau \le (\log x)^C$ then
$$ \pi(x,\tau) = \textup{Li}(x)/ \phi(\tau)  \pm \Oh(x \exp(-C' \sqrt{\log x})), $$
where $\textup{Li}$ denotes the offset logarithmic integral. Using $\textup{Li}(x) = \Theta(x / \log x)$ and $\phi(\tau) \le \tau \le (\log x)^C$, we obtain $\textup{Li}(x)/ \phi(\tau) = \tOmega(x)$, which dominates the error term $\Oh(x \exp(-C' \sqrt{\log x}))$ for any sufficiently large $x$. Hence, we obtain $\pi(x,\tau) = \tOmega(x)$ for $\tau \le (\log x)^C$. Setting $C = 1/\eps$ and rearranging, we have $\pi(x,\tau) = \tOmega(x)$ for $x \ge \exp(\tau^\eps)$.

Hence, for the threshold $x := n^2 + \exp(\tau^\eps) = \Oh(\exp(\tau^\eps))$ we have $\pi(x,\tau) = \tOmega(n^2)$. We now set $P$ as the set of primes in $R := (1 + \tau\cdot \N) \cap [0,x]$. Clearly, we have $|P| = \tOmega(n^2)$. 
Moreover, note that the set $R$ contains $(x-1)/\tau \le x$ numbers, $\tOmega(x)$ of which are prime. Thus, a random number in $R$ is prime with probability $\Omega(1/\polylog(x)) = \tau^{-\Oh(\eps)}$. Now to pick a random $p \in P$, we repeatedly pick a random number $r \in R$ until it is prime. Recall that checking primality can be done in polynomial time $\Oh(\polylog(x)) = \tau^{\Oh(\eps)}$~\cite{agrawal2004primes}. The expected number of repetitions until we find a prime is also $\tau^{\Oh(\eps)}$. By halting after essentially the same number of repetitions we can also obtain a \emph{worst-case} time bound with sufficiently small error probability, say~$1/\tau$.
The total time for finding a random prime $p \in P$ is thus $\tau^{\Oh(\eps)}$. This procedure has to be performed only once, so this time is negligible compared to the remaining running time $\tOh(nt)$.  
Note that the number of log-factors in the bound $|P| = \tOmega(n^2)$ depends on $\eps$, and since the constants in the Siegel-Walfisz theorem are ineffective, we cannot give an explicit bound on the number of log-factors in terms of~$\eps$.

Assuming the Extended Riemann Hypothesis, very good asymptotic bounds on $\pi(x,\tau)$ are known. 
Using a bound by Titchmarsh, in~\cite[Proposition 4.3]{de2013fast} it is shown, along the lines of the last two paragraphs, that already for threshold $x := \tOh(\tau(\tau + n^2))$ (with sufficiently many hidden logarithmic factors) the set $R$ contains $\tOmega(n^2)$ primes, and sampling random elements of $R$ yields a prime in expected time $\Oh(\polylog(x)) = \Oh(\polylog(\tau))$. This finishes the proof.
\IfSODA{\qed}
\end{proof}

\section{Unbounded Subset Sum}
\label{sec:unbounded}

We briefly discuss the unbounded variant of \SubsetSum, where each input number can be chosen arbitrarily often (whereas in the bounded variant that we studied so far each input number can be chosen at most once). Specifically, given a set $Z$ of $n$ positive integers and target $t$, the task is to determine whether any sequence over $Z$ sums to exactly $t$. Note that here it makes no sense to define $Z$ as a multi-set and thus $n \le t$. We present a simple $\tOh(t)$ algorithm.

First consider the problem whether any sequence over $Z$ of length at most $k$ sums to $t$. Clearly, this problem can be solved in time $\tOh(k t)$, by computing the $k$-fold sumset $Z \plt Z \plt \ldots \plt Z$, and checking whether it contains $t$. Note that for the usual bounded variant of \SubsetSum this property breaks down,  since already the sumset $Z \plt Z$ is not necessarily contained in $\SSB{Z}{t}$, as it contains sums of the form $z+z$ for $z \in Z$, so we have to resort to sumsets $Z_1 \plt \ldots \plt Z_k$ over partitionings $Z = Z_1 \cup \ldots \cup Z_k$. This is a reason why (bounded) \SubsetSum is much harder than the unbounded version \UnbSubsetSum. 

\algref{unbsubsetsum} solves the \UnbSubsetSum problem for each $t' \le t$ at once, i.e., it computes $\SSunb{Z}{t} := \{0,\ldots,t\} \cap \{ a_1+\ldots+a_k \mid k \in \N_{\ge 0}, \, a_1,\ldots, a_k \in Z  \}$.
We use the fact that the classic dynamic programming algorithm computes $\SSunb{Z}{t}$ in time $O(n t)$. 

\algnewcommand{\LineComment}[1]{\Statex \(\triangleright\) \emph{#1}}
\begin{algorithm}
\caption{\UnbSubsetSum in time $\tOh(t)$. Given $n$ positive integers $Z$ and target~$t$, the algorithm computes for each $t' \le t$ whether a sequence over $Z$ sums to $t'$.}\label{alg:unbsubsetsum}
\begin{algorithmic}[1]
\State compute $S_0 := \SSunb{Z}{t/n}$ using the classic dynamic program in time $\Oh(n \cdot t/n) = \Oh(t)$
\For {$i=1,\ldots,\lceil \log n \rceil$}
  \State $t_i := 2^i t / n$
  \State $S_i := S_{i-1} \pl[t_i] S_{i-1} \pl[t_i] Z$
\EndFor
\State \Return $S_{\lceil \log n \rceil} \cap \{0,\ldots,t\}$
\end{algorithmic}
\end{algorithm}

Since the $i$-th iteration takes time $\tOh(t_i) = \tOh(2^i t / n)$, the total running time of \algref{unbsubsetsum} is $\tOh(t)$, more precisely $\Oh(t \log t)$. For correctness, we argue inductively that $S_i = \SSunb{Z}{2^i t/n}$, and thus $S_{\lceil \log n \rceil} \cap \{0,\ldots,t\} =\SSunb{Z}{t}$. 
Consider any sequence $A$ over $Z$ summing to an integer in $[0,2^i t/n]$. If $\SUM(A) \le 2^{i-1} t/n$, then 
\begin{align*}
  \SUM(A) \in & \SSunb{Z}{2^{i-1} t/n} \IfSODA{\\ &} = S_{i-1} \subseteq S_{i-1} \pl[t_i] S_{i-1} \pl[t_i] Z = S_i.
\end{align*}
Thus, assume $\SUM(A) > 2^{i-1} t/n$.
We split $A = (a_1,\ldots,a_k)$ at the smallest index $j$ with $a_1+\ldots+a_j > 2^{i-1} t/n$ into $A_1 = (a_1,\ldots,a_{j-1})$, $a_j$, and $A_2 = (a_{j+1},\ldots,a_k)$. Observing that $a_j \in Z$ and $\SUM(A_1),\SUM(A_2) \le 2^{i-1} t/n$ and hence $\SUM(A_1),\SUM(A_2) \in S_{i-1}$, we obtain $\SUM(A) \in S_{i-1} \pl[t_i] S_{i-1} \pl[t_i] Z = S_i$. Hence, $\SSunb{Z}{2^i t/n} \subseteq S_i$. Correctness follows, since clearly the converse $S_i \subseteq \SSunb{Z}{2^i t/n}$ holds as well.

\section{Conclusion}

A textbook algorithm solves \SubsetSum in pseudopolynomial time $\Oh(n t)$ on inputs consisting of $n$ positive numbers with target $t$. As our main result we present an improved algorithm running in time $\tOh(n + t)$. 
This improves upon a classic algorithm and is likely to be near-optimal, since it matches conditional lower bounds of $t^{1-\eps}$ for any $\eps > 0$ from \SetCover and combinatorial \kClique. 
Our algorithm heavily uses randomization and has one-sided error. The main tricks are to find \emph{small} solutions using color-coding, and to split \emph{large} solutions by a two-stage color-coding-like process.

We also improve the best known polynomial space algorithm for \SubsetSum to time $\tOh(nt)$ and space $\tOh(n)$ assuming the Extended Riemann Hypothesis, and to time $\tOh(n \, t^{1+\eps})$ and space $\tOh(n \, t^\eps)$ for any $\eps > 0$ unconditionally. This improvement is achieved by combining our $\tOh(n+t)$ algorithm with the previously best polynomial space algorithm by Lokshtanov and Nederlof~\cite{lokshtanov2010saving}, and by working modulo a random prime in an arithmetic progression, which explains the connection to ERH. We leave it as an open problem to obtain an algorithm with time $\tOh(n+t)$ and space $\tOh(n)$.

Our techniques are very different from the recent $\tOh(\sqrt{n} \, t)$ algorithm by Koiliaris and Xu~\cite{KoiliarisX15}, which is the fastest known deterministic algorithm.   


\paragraph{Acknowledgements}
The author wants to thank Marvin K\"unnemann and Jesper Nederlof for providing useful comments on a draft of this paper.

\end{document}